\begin{document}
\begin{frontmatter}
  \title{Two-dimensional Kripke Semantics II: \\[1ex] Stability and Completeness\thanksref{ALL}}
 \thanks[ALL]{
  This work was supported by the UKRI Engineering and Physical Sciences
  Research Council grants EP/Y000242/1 and EP/Y033418/1, as well as a grant from the
  Advanced Research + Invention Agency (ARIA).
  }
  \author{G. A. Kavvos\thanksref{a}\thanksref{alexemail}}	
   \address[a]{School of Computer Science\\ University of Bristol\\ Bristol, United Kingdom}
   \thanks[alexemail]{Email: \href{mailto:alex.kavvos@bristol.ac.uk} {\texttt{\normalshape
        alex.kavvos@bristol.ac.uk}}} 

\begin{abstract} 
  We revisit the duality between Kripke and algebraic semantics of
  intuitionistic and intuitionistic modal logic. We find that there is a certain
  mismatch between the two semantics, which means that not all algebraic models
  can be embedded into a Kripke model. This leads to an alternative proposal for
  a relational semantics, the stable semantics. Instead of an arbitrary partial
  order, the stable semantics requires a distributive lattice of worlds. We
  constructively show that the stable semantics is exactly as complete as the
  algebraic semantics. Categorifying these results leads to a 2-duality between
  two-dimensional stable semantics and categories of product-preserving
  presheaves, i.e. models of algebraic theories in the style of Lawvere.
\end{abstract}

\begin{keyword}
  intuitionistic logic, modal logic, intuitionistic modal logic, Kripke
  semantics, algebraic semantics, duality, filters, presheaves, sifted colimits,
  product-preserving presheaves, Lawvere theories
\end{keyword}

\end{frontmatter}

\section{Introduction}
  \label{section:introduction}

In a previous paper I revisited the relationship between the Kripke and
algebraic semantics of intuitionistic logic and (intuitionistic) modal logic
\cite{kavvos_2024a}. Kripke frames (i.e. partial orders) correspond to a certain
class of complete Heyting algebras, the \emph{prime algebraic lattices}. This
amounts to a \emph{duality} $\Op{\POSET} \Equiv \PRIMEALGLATT$, which may be
refined to `truth-preserving' morphisms on one side, and implication-preserving
on the other.

What is curious is that this duality can be reproduced at the level of
categories, which model \emph{proofs}. Replacing a Kripke frame by a category
leads to an evident definition of a proof-relevant \emph{two-dimensional Kripke
semantics}. Taking presheaves over the category gives a \emph{bicartesian closed
category}, i.e. a categorical/proof-relevant model of intuitionistic logic. The
interpretation of formulas is then a direct categorification of Kripke
semantics. This leads to a 2-duality $\textbf{Cat}^\text{op}_\text{cc} \Equiv
\PSHCAT$ between Cauchy-complete categories (qua two-dimensional Kripke frames)
and presheaf categories (qua prime algebraic lattices). 

Moreover, this story can be adapted to \emph{intuitionistic modal logic}. There
is no widespread agreement on what the latter is. However, in
\cite{kavvos_2024a} I showed that a relation that is canonically compatible with
a partial order (i.e. a \emph{bimodule}) induces two adjoint modalities
$\blacklozenge \Adjoint \Box$ by Kan extension. This provides a canonical
proposal as to what an intuitionistic modal logic should be; its corresponding
Kripke semantics is
\begin{align}
  \label{equation:kripke-semantics-modal}
  \KSat{w}{\blacklozenge \varphi}\
  &\defequiv\
    \exists v.\ v \mathbin{R} w \text{ and } \KSat{v}{\varphi}
  \\
  \KSat{w}{\Box \varphi}\
  &\defequiv\
    \forall v.\ w \mathbin{R} v \text{ implies } \KSat{v}{\varphi}
\end{align}
While $\Box$ is indeed the expected modality, $\blacklozenge$ uses $R$ in the
\emph{opposite} variance to the more common $\lozenge$ modality. Conversely, any
adjunction $\blacklozenge \Adjoint \Box$ on a prime algebraic lattice uniquely
corresponds to a bimodule, giving a duality $\Op{\EBIMOD} \Equiv \PRIMEALGLATTO$
between bimodules on a partial order and prime algebraic lattices $L$ equipped
with an \emph{operation}, i.e. a meet-preserving $\Box : L \to L$ (equivalently,
a join-preserving $\blacklozenge : L \to L$).

The modal picture can also be categorified, by adding proofs to bimodules,
making them into \emph{profunctors}. Left Kan extension then induces an
adjunction on $\FUNC{\CC}{\SET}$. By unfolding the definition of these adjoints
we obtain a remarkable proof-relevant version of
\eqref{equation:kripke-semantics-modal}, viz. the presheaves
\begin{align*}
  (\blacklozenge P)(w) &= \int^{v \in \CC} R(v, w) \times P(v) \\
  (\Box P)(w) &= \int_{v \in \CC} R(w, v) \to P(v)
  \cong \Hom[\FUNC{\CC}{\SET}]{R(w, -)}{\sem{\varphi}}
\end{align*}
for each presheaf $P : \CC \fto \SET$ (qua formula). This again amounts to a
2-duality $\EPROF^\text{op}_\text{cc} \Equiv \PSHCATO$ between profunctors on a
Cauchy-complete category, and presheaf categories equipped with a continuous
$\Box : \FUNC{\CC}{\SET} \fto \FUNC{\CC}{\SET}$ (equivalently, a cocontinuous
$\blacklozenge : \FUNC{\CC}{\SET} \fto \FUNC{\CC}{\SET}$, by local
presentability). This is consistent with what we have come to regard in the last
few years as the categorical semantics of modal logic, i.e. an adjunction on a
bicartesian closed category
\cite{clouston_2018,birkedal_2020,gratzer_2020,gratzer_2021,gratzer_2022}.

\subsection*{Completeness} 

These dualities also come with theorems relating validity in the Kripke
semantics to validity in the induced algebraic semantics. Consequently, we are
able to use them to prove completeness of the algebraic semantics from
completeness of the Kripke semantics, as follows. Suppose that a formula of
intuitionistic logic is valid in all Heyting algebras; it is then a fortiori
valid in all prime algebraic lattices, and hence valid in all Kripke frames.
Therefore, if the Kripke semantics is complete, this formula must be provable.
As a result, completeness of the Kripke semantics implies completeness of the
algebraic semantics.

Surprisingly, the converse implication is also provable. An old construction,
whose origins we can trace at least as far as the book by Fitting \cite[\S
1.6]{fitting_1969}, gives a recipe for inducing a Kripke semantics from a
general Heyting algebra, by taking all \emph{prime filters}. The resulting
structure is richer than an ordinary frame: it is a \emph{descriptive frame}
\cite[\S 8.4]{chagrov_1996}. This is part of a duality between Heyting algebras
and descriptive frames, which is known as \emph{Esakia duality}
\cite{esakia_2019}. It is then possible to relate validity in the descriptive
frame to validity in the Heyting algebra. A categorical version of this
construction for coherent toposes has been shown by Joyal: see \cite[Theorem
6.3.5]{makkai_1977}. When simplified, Joyal's result amounts to an embedding of
every Heyting algebra into a prime algebraic lattice that preserves all
connectives \cite[\S 3.2]{harnik_1992} \cite{makkai_1995} \cite{galli_2000}.

However, the part of this result that relates validity in the descriptive frame
to validity in the Heyting algebra requires the \emph{prime filter existence
theorem} \cite[\S 10]{davey_2002} \cite[\S I.2.3]{johnstone_1982}, which is a
weak form of the axiom of choice. Similarly, the result of Joyal quoted above
uses highly non-constructive reasoning.

\emph{This paper is about trying to avoid this particular reasoning step.}
Unlike other streams of work \cite{bezhanishvili_2020}, this is not due to a
predilection for constructivity. Instead, we are looking for a construction that
we can \emph{categorify}, so that it applies to models of intuitionistic (modal)
proofs as well. This will in turn provide interesting information about the
completeness of various classes of models of typed (modal) $\lambda$-calculi.

\subsection*{Stable semantics}

However, relating Kripke and algebraic semantics appears impossible without
using prime filters. In an attempt to overcome this I will introduce a new
relational semantics for intuitionistic logic, which I call \emph{stable
semantics}. The essence can summarised as replacing upper sets, which play a
central r\^{o}le in Kripke semantics \cite{kavvos_2024a}, with \emph{filters}.
This inescapably leads to the use of distributive lattices as `Kripke frames,'
as well as a different interpretation of disjunction, which is reminiscent of
Beth semantics \cite{bezhanishvili_2019} and Kripke-Joyal semantics \cite[\S
II.9]{lambek_1988} \cite[\S VI.6]{mac_lane_1994} \cite[\S 6.6]{borceux_1994c}.
The attendant duality, which is now between distributive lattices and
\emph{spectral locales} (aka coherent frames), is already well-known from Stone
duality \cite[\S II.3.2]{johnstone_1982}. Furthermore, the coherent semantics
can be straightforwardly extended to modalities.

The advantage of stable semantics is that we can constructively show an
\emph{equi-completeness} result. Every Heyting algebra is a distributive
lattice, and so can be used as the `possible worlds' of a stable semantics.
Moreover, every distributive lattice can be interpreted into a complete Heyting
algebra---in fact a spectral locale---in a way that preserves all logical
structure. Thus, the completeness of the stable semantics directly follows from
the completeness of the algebraic semantics.

\subsection*{Two-dimensional stable semantics and algebraic theories}

Categorifying the above story engenders a surprise. For our purposes, the most
technically expedient categorification of filters is the \emph{sifted colimit
completion}. The very same completion plays a central r\^{o}le in the
\emph{algebraic theories} in the style of Lawvere \cite{adamek_2010}: the
category of algebras is just the sifted completion of the opposite of its
\emph{theory}, which is just a cartesian category (i.e.~has finite products).

If we assume that the opposite of a theory is a \emph{distributive category}
\cite{cockett_1993}, the results on stable semantics can be directly
categorified. This shows that the class of product-preserving presheaves (cf.
filters) on that distributive category (cf. stable frame) is a complete model of
the typed $\lambda$-calculus with sums and an empty type. These results can be
readily adapted to the proofs of intuitionistic modal logic.

As a result, the results herein bear a striking kinship with those of
categorical algebra. I am not yet certain what the long-term impact of this
observation is, but it seems far too compelling to ignore.

\subsection*{Roadmap}

In \S\ref{section:int-models-as-frames} I discuss what it means to regard a
Heyting algebra as a set of possible worlds, as well as the technical issues
that arise when we try to embed that representation into a prime algebraic
lattice. This leads to the introduction of stable semantics in
\S\ref{section:stable}, which is proved equi-complete with Heyting algebras. In
\S\ref{section:stable-modal} I show that the stable semantics can be
effortlessly adapted to interpret adjoint modalities. Then, in
\S\ref{section:2dss-int} I proceed to categorify the stable semantics; this
requires a recap of the elements of Lawvere's approach to algebraic theories.
Following that I give an equi-completeness proof, and discuss the resultant
syntax-semantics 2-duality. Finally, this approach is extended to intuitionistic
modal proofs in \S\ref{section:2dss-modal}.

\section{Heyting algebras as possible worlds}
  \label{section:int-models-as-frames}

It is a folklore fact that every Kripke frame $(W, \sqsubseteq)$ induces a prime
algebraic lattice $\FUNC{W}{\TV}$, which consists of its upper sets ordered by
inclusion; see the first paper on two-dimensional Kripke semantics for a
detailed exposition \cite[\S 2]{kavvos_2024a}. Looking at this lattice as a
Heyting algebra, i.e. an algebraic semantics for intuitionistic logic, we see
that every formula $\varphi$ is interpreted as the set $\sem{\varphi} \subseteq
W$ of worlds in which it is true. This set is upper because Kripke semantics is
monotonic: $w \sqsubseteq v$ can be read as saying that world $v$ has
potentially more information than world $w$. Thus, the passage from $w$ to $v$
may force more formulas to be true, but will not invalidate formulas that were
previously known to be true.

It is interesting to consider a Heyting algebra $H$ in the capacity of a Kripke
frame itself. The most evident way of doing so is by taking the opposite of its
order, yielding the partial order $(\Op{H}, \sqsubseteq)$, where $\sqsubseteq$
is just $\geq$ in $H$. Thinking of $H$ as a Tarski-Lindenbaum algebra of an
intuitionistic theory, we see that
\[
  \varphi \sqsubseteq \psi 
    \quad\text{ iff }\quad
  \psi \leq \varphi
    \quad\text{ iff }\quad
    \text{``}\psi \vdash \varphi\text{''}
\]
Roughly, each element $\varphi \in H$ can be thought of as a formula that
specifies what we currently know. The relation $\varphi \sqsubseteq \psi$ holds
just when $\psi$ implies $\varphi$, i.e. when $\psi$ potentially contains more
information.

The order-embedding
$
  \UpSet : H \to \FUNC{\Op{H}}{\TV}
$
then takes $\varphi \in H$ to $\SetComp { \psi \in H }{ \varphi \leq \psi } =
\SetComp{ \psi \in H }{ \psi \leq \varphi }$, i.e. the set of formulas that
imply $\varphi$. It is well-known that $\UpSet$ preserves finite meets and
exponentials, so that
\begin{align*}
  \UpSet{\top} &= H
                &
  \UpSet*{\varphi \land \psi} &= \UpSet{\varphi} \land \UpSet{\psi}
                             &
  \UpSet*{\HeyExp{\varphi}{\psi}} &= \HeyExp{\UpSet{\varphi}}{\UpSet{\psi}}
\end{align*}
However, $\UpSet$ famously does \emph{not} preserve disjunction: sometimes
$\UpSet*{\varphi \lor \psi} \neq \UpSet{\varphi} \lor \UpSet{\psi}$. Thus, we
can only embed the $(\land \to)$ fragment of the logic into a prime algebraic
lattice in this manner.

These facts are perhaps better known at the two-dimensional level. Suppose that
$\CC$ is a bicartesian closed category, i.e. a model of intuitionistic proofs.
It is a basic fact of category theory that the Yoneda functor $\Yo : \CC \fto
\FUNC{\Op{\CC}}{\SET}$ is an \emph{embedding}, i.e. full, faithful, and
injective on objects. It is also well-known that $\Yo$ preserves finite products
and exponentials \cite{awodey_2010}, i.e. that
\begin{align*}
  \Yo{\terminal} &\cong \terminal
                 &
  \Yo{c \times d}  &\cong \Yo{c} \times \Yo{d}
                   &
  \Yo{\HeyExp{c}{d}} &\cong \HeyExp{\Yo{c}}{\Yo{d}}
\end{align*}
For a totally unrelated purpose, Dana Scott \cite{scott_1980} noticed that this
induces a useful isomorphism:
\begin{lemma}[Scott]
  \label{lemma:ccc-conservativity-1}
  If $\varphi$ uses neither disjunction nor falsity then
  $\sem{\varphi}_{\FUNC{\Op{\CC}}{\SET}} \cong \Yo{\sem{\varphi}_{\CC}}$.
\end{lemma}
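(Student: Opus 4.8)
The plan is to prove this by induction on the structure of the formula $\varphi$, which ranges over the $(\land, \to, \top)$-fragment. The base cases are atomic propositions and $\top$. For atoms, $\sem{p}_\CC$ is some chosen object $c_p$ of $\CC$, and $\sem{p}_{\FUNC{\Op{\CC}}{\SET}}$ is the corresponding chosen presheaf, which we take to be $\Yo{c_p}$ by construction; so the isomorphism is the identity. For $\top$, we have $\sem{\top}_\CC = \terminal$ and $\sem{\top}_{\FUNC{\Op{\CC}}{\SET}} = \terminal$, and the preservation fact $\Yo{\terminal} \cong \terminal$ closes the case.

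**Next I would** handle the two inductive cases, $\varphi = \psi_1 \land \psi_2$ and $\varphi = \HeyExp{\psi_1}{\psi_2}$, each of which follows the same pattern: unfold the semantic clause, apply the induction hypothesis, and then invoke the corresponding preservation property of $\Yo$. For conjunction:
\[
  \sem{\psi_1 \land \psi_2}_{\FUNC{\Op{\CC}}{\SET}}
    = \sem{\psi_1}_{\FUNC{\Op{\CC}}{\SET}} \times \sem{\psi_2}_{\FUNC{\Op{\CC}}{\SET}}
    \cong \Yo{\sem{\psi_1}_\CC} \times \Yo{\sem{\psi_2}_\CC}
    \cong \Yo{\sem{\psi_1}_\CC \times \sem{\psi_2}_\CC}
    = \Yo{\sem{\psi_1 \land \psi_2}_\CC},
\]
where the first step is the definition of the product in the presheaf semantics, the second is the induction hypothesis (functoriality of $\times$ applied to the two component isos), and the third is $\Yo{c \times d} \cong \Yo{c} \times \Yo{d}$. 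The implication case is verbatim the same with $\times$ replaced by $\HeyExp{-}{-}$, using instead that $\Yo$ preserves exponentials.

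**The main subtlety**, rather than a genuine obstacle, is bookkeeping the naturality and coherence of these isomorphisms: the induction hypothesis must be stated strongly enough that the component isos can be transported through $\times$ and $\HeyExp{-}{-}$, which is automatic since these are functorial (bifunctorial, contravariant in the first argument for $\HeyExp{-}{-}$), and one should check that the chosen preservation isomorphisms for $\Yo$ are natural so that the composite isomorphism is well-defined and independent of parsing. Because disjunction and falsity are excluded by hypothesis, we never need $\Yo$ to preserve coproducts or the initial object — which is exactly the point, since $\Yo$ fails to do so in general (this is the two-dimensional shadow of $\UpSet$ not preserving $\lor$). So the induction goes through cleanly, and the restriction in the statement is precisely what makes each inductive step available.
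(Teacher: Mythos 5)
Your proof is correct and follows exactly the argument the paper intends: the lemma is stated immediately after the facts that $\Yo$ is an embedding preserving finite products and exponentials and that atoms are interpreted by representables, and the (omitted) proof is precisely your induction on $\varphi$ using those preservation isomorphisms. No discrepancies to report.
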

Here $\sem{\varphi}_{\CC}$ is the interpretation of $\varphi$ as an object of
$\CC$, and $\sem{\varphi}_{\FUNC{\Op{\CC}}{\SET}}$ is the interpretation of
$\varphi$ as an object of the category of presheaves $\FUNC{\Op{\CC}}{\SET}$,
both defined following the respective cartesian closed structure. In the second
instance basic propositions $p$ are interpreted by the representable
$\Yo{\sem{p}_{\CC}}$.

It is not difficult to extend this result to categorical models of modal logic.
Following the work of Clouston on Fitch-style $\lambda$-calculi
\cite{clouston_2018}, these are generally understood to be endo-adjunctions
\begin{diagram}
  \label{diagram:fitch-style}
  \begin{tikzpicture}[node distance=2.5cm, on grid]
    \node (W) {$\CC$};
    \node (V) [right = 4cm of W] {$\CC$};
    \path[->, bend left=20] (W) edge node [above] {$\Box$} (V);
    \path[->, bend left=20] (V) edge node [below] {$\blacklozenge$} (W);
    \node (Adj1) [right = 2cm of W] {$\AdjointLDown$};
  \end{tikzpicture}
\end{diagram}
on a bicartesian closed category $\CC$. Given such a model, take the left Kan
extension of $\Yo \circ \blacklozenge$ along Yoneda:
\begin{equation}
  \label{diagram:left-kan-scott}
  \begin{tikzpicture}[node distance=2.5cm, on grid, baseline=(current  bounding  box.center)]
    \node (C) {$\CC$};
    \node (PSHC) [right = 6cm of C] {$\FUNC{\Op{\CC}}{\SET}$};
    \node (E) [below = 2cm of PSHC] {$\FUNC{\Op{\CC}}{\SET}$};
    \node (C2) [below = 2cm of C] {$\CC$};
    \path[->] (C) edge node [above] {$\Yo$} (PSHC);
    \path[->] (C2) edge node [below] {$\Yo$} (E);
    \path[->] (C) 
      edge 
        node [left] {$\blacklozenge$} 
        node [right = .05em] {$\Adjoint$}
      (C2);
    \path[->, bend right = 45] (C2) edge node [right] {$\Box$} (C);
    \path[->, dashed] (PSHC) 
      edge 
      node [left] {$\blacklozenge_{p}$} 
        node [right = .15em] {$\Adjoint$}
      (E);
      \path[->, bend right = 45, dotted] (E) edge node [right] {$\Box_p$} (PSHC);
  \end{tikzpicture}
\end{equation}
$\blacklozenge_p$ is then a colimit-preserving functor on the presheaf category,
and has a right adjoint $\Box_p$. Thus, we obtain a categorical model of modal
logic on the presheaf category. It is easy to calculate that the action of these
adjoint functors on representables is essentially the same as that of
$\blacklozenge$ and $\Box$, in that \eqref{diagram:left-kan-scott} commutes:
\begin{align*}
  \blacklozenge_p\prn{\Yo{c}} &\defeq \Yo{\blacklozenge c} \\
  \Box_p\prn{\Yo{c}} 
    &\defeq \Hom[\FUNC{\Op{\CC}}{\SET}]{\DelimMin{1} \Yo{\blacklozenge(-)}}{\Yo{c}}
     \cong  \Hom[\CC]{\blacklozenge(-)}{c}
     \cong  \Hom[\CC]{-}{\Box c}
         =  \Yo{\Box c}
\end{align*}
Consequently, Scott's lemma directly extends to the categorical semantics of the
$(\land \to \blacklozenge \Box)$ fragment of intuitionistic modal logic. Notice
that the diagram \eqref{diagram:left-kan-scott} witnesses $\Yo$ as a (weak)
morphism of categorical models of modal logic without disjunction: $\Yo$ is a
cartesian closed functor that preserves the adjunction. Of course, this result
can be de-categorified to one for Heyting algebras equipped with an adjunction.

This leaves the mystery of disjunction. One might think that sheaves are the
answer. However, we will do something more radical instead.

\section{Stable semantics of intuitionistic logic}
  \label{section:stable}

Given an arbitrary Kripke frame, i.e. a partial order $(W, \sqsubseteq)$, Kripke
semantics interprets every formula as an \emph{upper set} of worlds, i.e. a set
$S \subseteq W$ for which $w \in S$ and $w \sqsubseteq v$ implies $v \in S$. The
stable semantics will instead revolve around the notion of a \emph{filter} over
$W$.
\begin{definition}
  A \emph{filter} over $(W, \sqsubseteq)$ is a non-empty subset $F \subseteq W$
  which is
  \begin{itemize}
    \item \emph{upper}, in that $w \in F$ and $w \sqsubseteq v$ implies $v \in
      F$; and
    \item \emph{filtered}, in that whenever $w, v \in F$ there exists a $z \in
      F$ with $z \sqsubseteq w$ and $z \sqsubseteq v$.
  \end{itemize}
\end{definition}
We write $\Filt*{W}$ for the set of filters over $W$. $\Filt*{W}$ is a poset
under inclusion---in fact it is a \emph{directed complete partial order}
(dcpo) (without a bottom element) \cite[\S O-2.8]{gierz_2003}.

When $W$ has more structure the definition of a filter can be somewhat
simplified.
\begin{proposition}
  \label{proposition:filters-meet-semilattice}
  Let $(W, \sqsubseteq)$ be a meet-semilattice. A subset $F \subseteq W$ is a
  filter if and only if it is
  \begin{itemize}
    \item \emph{upper}, in that $w \in F$ and $w \sqsubseteq v$ implies $v \in
      F$; and
    \item a \emph{sub-meet-semilattice}, in that $1 \in F$ and $w, v \in F$
      implies $w \land v \in F$
  \end{itemize}
\end{proposition}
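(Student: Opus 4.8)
The plan is to prove the two implications separately, using the semilattice structure to bridge between the abstract ``filtered'' condition and the concrete ``closed under binary meets'' condition.

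For the forward direction, suppose $F$ is a filter. Upperness is immediate since it is literally one of the two defining conditions. For the sub-meet-semilattice condition, first note $F$ is non-empty, so pick any $w \in F$; since $w \sqsubseteq 1$ (the top element of the meet-semilattice, here written $1$) and $F$ is upper, we get $1 \in F$. Now given $w, v \in F$, the filtered condition supplies some $z \in F$ with $z \sqsubseteq w$ and $z \sqsubseteq v$; by the universal property of the binary meet, $z \sqsubseteq w \land v$, and since $F$ is upper, $w \land v \in F$.

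For the backward direction, suppose $F$ is upper and a sub-meet-semilattice. It is non-empty because $1 \in F$, and it is upper by hypothesis. For the filtered condition, given $w, v \in F$ we have $w \land v \in F$ by closure under meets, and $w \land v \sqsubseteq w$, $w \land v \sqsubseteq v$ by the defining property of meets, so $z \defeq w \land v$ witnesses filteredness.

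None of this is hard; the only thing to be careful about is notational, namely that the proposition writes the top element of the meet-semilattice as $1$ rather than $\top$, and that one must invoke non-emptiness (resp.\ $1 \in F$) to move between the two formulations of that endpoint condition. The real content is simply the observation that in a meet-semilattice a lower bound for $\{w,v\}$ exists in $F$ if and only if the greatest such lower bound, $w \land v$, lies in $F$ — which is exactly the interplay between filteredness and upperness.
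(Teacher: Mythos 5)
Your proof is correct and is exactly the standard argument one would expect; the paper omits the proof of this proposition entirely as elementary, and your two directions (filteredness plus upperness gives closure under the greatest lower bound, and conversely $w \land v$ itself witnesses filteredness) are the only natural way to prove it.
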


A \emph{stable frame} is a partial order $(W, \sqsubseteq)$ which is a
\emph{distributive lattice}. This means that it has both finite joins and meets,
and that they also satisfy the distributive law $a \land (x \lor y) = (a \land
x) \lor (a \land y)$. Consequently, they also satisfy the dual law $a \lor (x
\land y) = (a \lor x) \land (a \lor y)$ \cite[\S I.1.5]{johnstone_1982}, which
we will use heavily.

A stable frame has much more structure than a good old fashioned Kripke frame.
To begin, any two worlds $w, v \in W$ have a meet $w \land v$ and a join $w \lor
v$ (we use the same notation as the logic, but rely on context for
disambiguation). If we think of each world as containing information---in
particular about which variables have become true---then these two operators
tell us that it is possible to find least and greatest upper bounds of
information. The fact the distributive law holds means that the interpretation
of these bounds as `intersection of information' and `union of information' is
tenable.

Furthermore, $W$ has a bottom element $0$ and a top element $1$. The bottom
element $0$ represents the \emph{baseline level of information}, i.e. the fewest
facts we may regard as true. The top element $1$ represents a \emph{supernova of
information}. This amount of information implies all formulae---even false ones.

A \emph{stable model} $\Model{M} = (W, \sqsubseteq, V)$ consists of a stable
frame $(W, \sqsubseteq)$ and a function $V : \Vars \to \Filt*{W}$. The
\emph{valuation} $V$ assigns to each propositional variable $p \in \Vars$ a
filter $V(p) \subseteq W$, to be thought of as the set of worlds in which $p$ is
true. The fact this is a filter leads to the following intuitions:
\begin{description}
  \item[Upper set] A proposition that is true remains true as information
    increases.
  \item[Top element] Every proposition is true at the supernova world $1$.
  \item[Meets] If $w, v \in V(p)$ then both $w$ and $v$
    contain the information that $p$ is true. Therefore, their greatest lower
    bound should also contain that information, so that $w \land v \in V(p)$.
\end{description}
Notice that if $0 \in V(p)$ then $V(p) = W$, as filters are upper sets. Thus, a
variable that is true at the baseline world $0$ is true throughout a stable
frame.

The \emph{stable semantics} is defined through a relation
$\KSat[\Model{M}]{w}{\varphi}$ with the meaning that $\varphi$ is true in world
$w$ of model $\Model{M}$. When it is clear which model we are using we will skip
it, writing simply $\KSat{w}{\varphi}$. The clauses for
$\KSat[\Model{M}]{w}{\varphi}$ are much like those for the Kripke semantics,
with the characteristic clause for $\to$:
\[
  \KSat[\Model{M}]{w}{\varphi \to \psi}\
  \defequiv\
    \forall w \sqsubseteq v.\
    \KSat[\Model{M}]{v}{\varphi} \text{ implies } \KSat[\Model{M}]{v}{\varphi}
\]
The only clauses that change are the ones for falsity and
disjunction:\footnote{One of the reviewers pointed out that the clause for
disjunction can be simplified to $w = v_1 \land v_2$. However, that version of
the clause does not seem immediately amenable to categorification, unlike this
one.}
\begin{alignat*}{3}
  &\KSat[\Model{M}]{w}{\bot}\
  &&\defequiv\
    (w = 1)
  \\
  &\KSat[\Model{M}]{w}{\varphi \lor \psi}\
  &&\defequiv\
    \exists v_1, v_2.\,\, v_1 \land v_2 \sqsubseteq w \text{ and }
    \KSat[\Model{M}]{v_1}{\varphi} \text{ and }
    \KSat[\Model{M}]{v_2}{\psi}
\end{alignat*}
There are a number of things to notice about this definition.

First, the falsity $\bot$ can now be a true formula; but it is only true at $1
\in W$, which is top element for the information order $\sqsubseteq$. In fact,
every formula is true at $1$. In that sense, $1$ is a world that contains so
much information that it forces everything---even falsity!---to be true. A
similar concept of \emph{exploding}, \emph{fallible} or \emph{inconsistent
worlds} is common in the context of intuitionistically-provable completeness
proofs for intuitionistic logic and associated realizability models
\cite{veldman_1976,de_swart_1976,troelstra_1988,harnik_1992,lipton_1992}.

Second, the clause for the disjunction $\varphi \lor \psi$ at world $w$ requires
that \emph{both} $\varphi$ and $\psi$ are true at some worlds $v_1$ and $v_2$
respectively. However, the common information between $v_1$ and $v_2$, i.e. $v_1
\land v_2$, must be less than the information at $w$. When $v_1 \land v_2
\sqsubseteq w$ we say that $w$ \emph{fans into} $v_1$ and $v_2$. Alyssa Renata
has proposed an intuitionistic reading of this clause: the world $w$ forces
$\phi \lor \psi$ exactly when it consistent to develop into a state of mind
where the mathematician has proven $\phi$, as well as a state of mind where they
have proven $\psi$. But what if one of the two formulas in a disjunction is a
contradiction? In that case we are saved by the supernova world: as $w \land 1 =
w$, we have that $\KSat{w}{\varphi \lor \bot}$ if and only if
$\KSat{w}{\varphi}$.

Third, note that the definition does not mention the joins $w \lor v$ of worlds
of $W$, even though such joins exist. While joins do not appear explicitly in
the semantic clauses, they are used in the following lemma to show that the
stable semantics is monotonic---in particular in the case of
implication.\footnote{I am grateful to an anonymous reviewer for pointing out
  that, strictly speaking, we do not need joins: we could just as well reproduce
  this result even in the weaker setting of a \emph{distributive semilattice},
  i.e. a meet-semilattice with the property that $a \land b \sqsubseteq x$
  implies that $x = a' \land b'$ for some $a \sqsubseteq a'$ and $b \sqsubseteq
  b'$. However, we prefer having joins, as their categorification is
  well-understood.}

\begin{lemma}[Filtering] \hfill
  \label{lemma:filtering}
  \begin{enumerate}
    \item $\KSat[\Model{M}]{w}{\varphi}$ and $w \sqsubseteq v$ imply
      $\KSat[\Model{M}]{v}{\varphi}$.
    \item $\KSat[\Model{M}]{1}{\varphi}$ for any $\varphi$.
    \item $\KSat[\Model{M}]{w_1}{\varphi}$ and $\KSat[\Model{M}]{w_2}{\varphi}$
      imply $\KSat[\Model{M}]{w_1 \land w_2}{\varphi}$.
  \end{enumerate}
\end{lemma}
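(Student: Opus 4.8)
The plan is to prove the three clauses \emph{simultaneously} by induction on the structure of $\varphi$. It is worth noting up front that, combining the three clauses with Proposition~\ref{proposition:filters-meet-semilattice} and the fact that a distributive lattice is in particular a meet-semilattice, the statement says exactly that for every $\varphi$ the set $\SetComp{w \in W}{\KSat{w}{\varphi}}$ is a filter over $W$. The base case is immediate: for a propositional variable $p$ this set is $V(p)$, which is a filter by the definition of a stable model. The constant $\top$ yields all of $W$, and $\bot$ yields $\{1\}$, a principal filter, for which all three clauses are checked by hand ($1 \sqsubseteq v$ forces $v = 1$; $1 \in \{1\}$; $1 \land 1 = 1$).

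For conjunction, the set of worlds forcing $\varphi \land \psi$ is the intersection of those forcing $\varphi$ and those forcing $\psi$, so the three clauses follow componentwise from the induction hypotheses. For disjunction I treat the clauses in turn. Monotonicity is trivial: if $v_1 \land v_2 \sqsubseteq w$ and $w \sqsubseteq u$, then $v_1 \land v_2 \sqsubseteq u$, so the same witnesses $v_1, v_2$ work at $u$. For truth at $1$, take $v_1 = v_2 = 1$ and invoke clause~(2) of the induction hypothesis for $\varphi$ and $\psi$. For closure under meets, suppose $\KSat{w_1}{\varphi \lor \psi}$ via witnesses $a_1, a_2$ with $a_1 \land a_2 \sqsubseteq w_1$, and $\KSat{w_2}{\varphi \lor \psi}$ via witnesses $b_1, b_2$ with $b_1 \land b_2 \sqsubseteq w_2$; then $a_1 \land b_1$ and $a_2 \land b_2$ serve as witnesses at $w_1 \land w_2$, using clause~(3) of the induction hypothesis for $\varphi$ and for $\psi$ together with $(a_1 \land b_1) \land (a_2 \land b_2) = (a_1 \land a_2) \land (b_1 \land b_2) \sqsubseteq w_1 \land w_2$. (This case uses only the semilattice structure.)

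The case of implication is where the distributive lattice structure earns its keep. Monotonicity is again automatic, since the clause for $\to$ quantifies over the up-set of $w$, and enlarging $w$ only shrinks that up-set; truth at $1$ holds because the only world above $1$ is $1$ itself, where $\psi$ is forced by clause~(2). For closure under meets, assume $\KSat{w_1}{\varphi \to \psi}$ and $\KSat{w_2}{\varphi \to \psi}$, and let $v$ satisfy $w_1 \land w_2 \sqsubseteq v$ and $\KSat{v}{\varphi}$; we must derive $\KSat{v}{\psi}$. The obstacle is that $v$ need not lie above $w_1$ or $w_2$ individually, so the hypotheses do not apply to $v$ directly. The remedy is to pass to $v \lor w_1$ and $v \lor w_2$: by monotonicity (clause~(1) of the induction hypothesis for $\varphi$) we get $\KSat{v \lor w_i}{\varphi}$, and since $w_i \sqsubseteq v \lor w_i$ the hypothesis $\KSat{w_i}{\varphi \to \psi}$ yields $\KSat{v \lor w_i}{\psi}$ for $i = 1, 2$. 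Clause~(3) of the induction hypothesis for $\psi$ then gives $\KSat{(v \lor w_1) \land (v \lor w_2)}{\psi}$, and the dual distributive law rewrites $(v \lor w_1) \land (v \lor w_2)$ as $v \lor (w_1 \land w_2)$, which equals $v$ because $w_1 \land w_2 \sqsubseteq v$. Hence $\KSat{v}{\psi}$, and the induction is complete. I expect this last sub-case — introducing the auxiliary joins and invoking the dual distributivity law — to be the only genuinely delicate point; everything else is bookkeeping.
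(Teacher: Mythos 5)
Your proof is correct and follows essentially the same route as the paper's: a simultaneous induction on $\varphi$, with the implication case of clause (iii) handled by passing to the joins $v \lor w_1$ and $v \lor w_2$ and collapsing $(v \lor w_1) \land (v \lor w_2)$ to $v$ via the dual distributive law, and the disjunction case handled by taking pairwise meets of the witnesses. The paper only writes out these two cases of clause (iii); your version fills in the remaining routine cases but adds no new ideas, and none are needed.
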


\begin{proof}
  We prove (iii), and only show the cases for implication and disjunction.

  Suppose that $\KSat{w_1}{\varphi \to \psi}$, $\KSat{w_2}{\varphi \to \psi}$,
  $w_1 \land w_2 \sqsubseteq v$ and $\KSat{v}{\varphi}$. As $w_i \sqsubseteq w_i
  \lor v$ and $v \sqsubseteq w_i \lor v$ we know that $\KSat{w_i \lor v}{\varphi
  \to \psi}$ and $\KSat{w_i \lor v}{\varphi}$ by (i), and hence that $\KSat{w_i
  \lor v}{\psi}$, for $i \in \{ 1, 2 \}$. Hence, by the IH, $\KSat{(w_1 \lor v)
  \land (w_2 \lor v)}{\psi}$. But $v = (w_1 \land w_2) \lor v = (w_1 \lor v)
  \land (w_2 \lor v)$ by distributivity.

  Suppose that $\KSat{w_1}{\varphi_1 \lor \varphi_2}$ and $\KSat{w_2}{\varphi_1
  \lor \varphi_2}$. Then there exist $v_{ij}$ with $v_{i1} \land v_{i2}
  \sqsubseteq w_i$ and $\KSat{v_{ij}}{\varphi_j}$. Then $\KSat{v_{1j} \land
  v_{2j}}{\phi_j}$, with $(v_{11} \land v_{21}) \land (v_{21} \land v_{22}) =
  (v_{11} \land v_{12}) \land (v_{12} \land v_{22}) \sqsubseteq w_1 \land w_2$.
\end{proof}

Both (i) and (iii) of this lemma require the existence of disjunctions; in fact,
they make essential use of the \emph{dual} distributive law $a \lor (x \land y)
= (a \lor x) \land (a \lor y)$.

It now remains to show how the stable semantics induce an algebraic semantics.
Given a stable frame $(W, \sqsubseteq)$ consider the set $\FUNC{W}{\TV}[\land]$
of monotonic functions $p : W \to \TV$ which preserve finite meets. This is a
partial order under the pointwise order. This poset has a number of curious
properties.

First, the monotonicity of $p : W \to \TV$ implies that if $p(w) = 1$ and $w
\sqsubseteq v$, then $p(v) = 1$. Hence, the subset $U = \ReIdx{p}{1}$ of $W$ is
an upper set. As $p(\top) = 1$, we know that $\top \in U$. Moreover, if $p(w) =
1$ and $p(v) = 1$, then $p(v \land w) = p(v) \land p(w) = 1$, so $U$ is closed
under finite meets. In short, $U$ is a filter. It is not difficult to show that
every filter $F \subseteq W$ gives rise to a map $p_F : W \to \TV$ which is
monotonic and finite-meet-preserving. Consequently, there is an order-bijection
$\Filt*{W} \cong \FUNC{W}{\TV}[\land]$. I will keep using the somewhat
cumbersome notation $\FUNC{W}{\TV}[\land]$ for reasons that will become clear
later.


Second, the poset $\FUNC{W}{\TV}[\land]$ is a \emph{complete lattice}, with
meets given by intersection \cite[\S O-1.15, O-2.8]{gierz_2003}. The bottom
element is $\{ \top \}$, while the binary join is $F_1 \lor F_2 =
\UpSet{\SetComp{ a \land b }{a \in F_1, b \in F_2}}$ \cite[\S
O-1.15]{gierz_2003}. Infinitary joins $\Sup F_i$ are given by $\UpSet{\SetComp{
a_{n_1} \land \ldots \land a_{n_j} }{ a_{n_k} \in F_{n_k} }}$. In fact, as $W$
is distributive, infinite joins and finite meets satisfy the \emph{infinite
distributive law}, making $\FUNC{W}{\TV}[\land]$ a \emph{locale}, or
\emph{complete Heyting algebra} (cHA) \cite[\S II.2.11]{johnstone_1982}. The
exponential is given by $\HeyExp{F_1}{F_2} = \SetComp{ w \in W } { \forall w
\sqsubseteq v.\ v \in F_1 \text{ implies } v \in F_2}$, which one can readily
check is a filter whenever $F_1$ and $F_2$ are---as long as $W$ is distributive.

Third, given any $w \in W$ its \emph{principal filter} $\UpSet{w}$ is
$\SetComp{v \in W}{w \sqsubseteq v} \in \FUNC{W}{\TV}[\land]$. As $w \sqsubseteq
v$ iff $\UpSet{v} \subseteq \UpSet{w}$, this gives an \emph{order-embedding} $
\UpSet : \Op{W} \to \FUNC{W}{\TV}[\land]$. The key to this paper is the
following lemma.
\begin{lemma}
  \label{lemma:upper-preserves}
  $\UpSet : \Op{W} \to \FUNC{W}{\TV}[\land]$ preserves finite and infinite
  meets, \textbf{finite joins}, and exponentials.
\end{lemma}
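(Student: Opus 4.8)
The plan is to reduce all four claims to the explicit descriptions of meets, joins, and exponentials on $\FUNC{W}{\TV}[\land] \cong \Filt*{W}$ recorded just above the statement, and then simply turn the crank. Concretely: $\UpSet{w} = \SetComp{v \in W}{w \sqsubseteq v}$ is the principal filter at $w$; arbitrary meets in $\FUNC{W}{\TV}[\land]$ are intersections; the least filter is $\SetComp{v}{1 \sqsubseteq v} = \{1\}$; the binary join $F_1 \lor F_2$ is the \emph{filter generated by} $\SetComp{a \land b}{a \in F_1,\ b \in F_2}$; and $\HeyExp{F_1}{F_2} = \SetComp{u \in W}{\forall u' \sqsupseteq u.\ u' \in F_1 \Rightarrow u' \in F_2}$. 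The only bookkeeping to keep straight is that a meet in $\Op{W}$ is a join in $W$, a join in $\Op{W}$ is a meet in $W$, and that the top and bottom of $\Op{W}$ are the bottom $0$ and the top $1$ of $W$.

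For meets I would argue finite and infinite ones in one stroke. Suppose a family $\SetComp{w_i}{i \in I}$ has a meet in $\Op{W}$, i.e. a join $s = \bigvee_i w_i$ in $W$ (with $I = \emptyset$ giving $s = 0$). Then $u \in \UpSet{s}$ iff $s \sqsubseteq u$ iff $w_i \sqsubseteq u$ for all $i$ iff $u \in \bigcap_i \UpSet{w_i}$; since intersection computes meets in $\FUNC{W}{\TV}[\land]$, this says $\UpSet{s}$ is the meet of the $\UpSet{w_i}$, and for $I = \emptyset$ one reads off $\UpSet{0} = W$, the greatest filter.

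Finite joins are the crux, and this is exactly where passing from upper sets to filters earns its keep. The nullary case is just $\UpSet{1} = \{1\}$, the least filter. For the binary case I would show $\UpSet*{w \land v} = \UpSet{w} \lor \UpSet{v}$, the join being the filter generated by $S = \SetComp{a \land b}{w \sqsubseteq a,\ v \sqsubseteq b}$. For $\supseteq$: if $w \sqsubseteq a$ and $v \sqsubseteq b$ then $w \land v \sqsubseteq a \land b$ by monotonicity of $\land$, so any $u$ above some $a \land b \in S$ lies above $w \land v$; hence the generated filter sits inside the upper set $\UpSet*{w \land v}$. For $\subseteq$: if $w \land v \sqsubseteq u$, take $a = w$ and $b = v$, so that $w \land v = a \land b \in S$ with $a \land b \sqsubseteq u$, whence $u$ lies in the generated filter. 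The entire content is the observation that the extra meets $a \land b$ that a filter is forced to contain are precisely what lets the join of two principal filters descend all the way to $\UpSet*{w \land v}$; with ordinary unions of upper sets this fails, and that failure is the classical obstruction to embedding disjunction.

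For exponentials I would unfold the formula: $u \in \HeyExp{\UpSet{w}}{\UpSet{v}}$ iff for every $u' \sqsupseteq u$ with $w \sqsubseteq u'$ one has $v \sqsubseteq u'$; since ``$u \sqsubseteq u'$ and $w \sqsubseteq u'$'' means ``$u \lor w \sqsubseteq u'$'', and the least such $u'$ is $u \lor w$ itself, the condition is equivalent to $v \sqsubseteq u \lor w$. Thus $\HeyExp{\UpSet{w}}{\UpSet{v}} = \SetComp{u \in W}{v \sqsubseteq u \lor w}$ (a filter, using distributivity of $W$ as in the paragraph above). Whenever the exponential of $w$ and $v$ exists in $\Op{W}$ — i.e. there is a least $e \in W$ with $v \sqsubseteq e \lor w$ — this filter is exactly $\UpSet{e}$, one inclusion being monotonicity of $u \mapsto u \lor w$ and the other minimality of $e$; in the setting that matters for completeness, where $W$ is the opposite $\Op{H}$ of a Heyting algebra, $\Op{W} \cong H$ has all exponentials, so the clause applies to every pair. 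I do not expect a genuine obstacle: each clause is a one-line verification once the descriptions of $\land$, $\lor$ and $\HeyExp$ on $\Filt*{W}$ are in hand, and the only things that need care are the $W$-versus-$\Op{W}$ duality bookkeeping and, for the exponential, reading ``preserves exponentials'' as ``preserves those that exist'' — harmless, since they all exist in the intended application.
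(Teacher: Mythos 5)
Your verification is correct, and it is essentially the argument the paper intends: the paper simply cites Gierz et al.\ (\S O-2.15) for most of the statement and declares the remainder elementary modulo the $W$-versus-$\Op{W}$ bookkeeping, which is exactly the direct computation from the explicit descriptions of meets, joins, and exponentials on $\Filt*{W}$ that you carry out. Your identification of the binary-join case as the point where the meet-closure of filters does the work, and your reading of the exponential clause as preservation of those exponentials that exist (all of them in the intended case $W = \Op{H}$), both match the paper's usage.
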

(The dual of) most of this lemma can be found in \cite[\S O-2.15]{gierz_2003};
the rest is elementary---at least if one notices that the domain of $\UpSet$ is
the \emph{opposite} of $W$.

Fourth, the principal upper sets $\UpSet{w}$ are special, as they are
\emph{compact}. Let $L$ be a dcpo. An element $d \in L$ is compact just if $d
\sqsubseteq \bigsqcup^{\uparrow} X$ implies that $d \sqsubseteq x$ for some $x
\in X$, for any directed set $X$. Like (completely) prime elements, this says
that $d$ contains a small, indivisible fragment of information: as soon as it
approximates a \emph{directed} supremum, i.e.~a `recursively defined element,'
it must approximate some `finite unfolding' of it. We write $\Compact{L}$ for
the set of compact elements of $L$. It is not hard to show that the compact
elements of $\FUNC{W}{\TV}[\land]$ are exactly the principal upper sets
$\UpSet{w}$ for some $w \in W$ \cite[\S I-4.10]{gierz_2003} \cite[Prop.
2.2.2]{abramsky_1994}. Due to the finitary cases of Lemma
\ref{lemma:upper-preserves} this means that the sub-poset of compact elements
$\Compact{\FUNC{W}{\TV}[\land]}$ is in fact a \emph{sub-lattice} of
$\FUNC{W}{\TV}[\land]$. This fact will prove very important.

Fifth, the complete lattice $\FUNC{W}{\TV}[\land]$ is \emph{algebraic} \cite[\S
I-4.10]{gierz_2003} \cite[\S 2.2]{abramsky_1994}. This means that all its
elements can be reconstructed as directed suprema of compact ones. In symbols,
$L$ is algebraic just if 
\[
  \text{for any $d \in L$, } \textsf{K}_d \defeq \SetComp{c \in \Compact{L}}{c
  \sqsubseteq d} \text{ is directed and } d = \DirectedSup \textsf{K}_d
\]
In summary, if $W$ is distributive, $\FUNC{W}{\TV}[\land]$ is a frame which is
(i) algebraic, and (ii) whose compact elements form a sub-lattice. Such lattices
are referred to as \emph{spectral locales} (or \emph{coherent frames}), and play
an important r\^{o}le in Stone duality. In fact, every such locale arises as the
partial order of filters of a distributive lattice \cite[\S
II.3.2]{johnstone_1982}:
\begin{theorem}
  \label{theorem:johnstone}
  A frame is coherent iff it is isomorphic to $\FUNC{W}{\TV}[\land]$ for a
  distributive lattice $W$.
\end{theorem}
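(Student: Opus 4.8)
The ($\Leftarrow$) direction has in effect already been carried out in the discussion preceding the theorem, so I would only need to record it: if $W$ is a distributive lattice, then $\FUNC{W}{\TV}[\land]$ is a frame (indeed a cHA), it is algebraic because every element is the directed supremum of the principal upper sets $\UpSet{w}$ below it, these principal upper sets are exactly its compact elements, and they form a sublattice by the finitary part of Lemma~\ref{lemma:upper-preserves}. Hence $\FUNC{W}{\TV}[\land]$ is coherent, and so is anything isomorphic to it.

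For ($\Rightarrow$), suppose $L$ is a coherent frame and set $K := \Compact{L}$. By hypothesis $K$ is a (bounded) sublattice of $L$; since $L$ is a frame, $K$ is distributive, and therefore so is its opposite $W := \Op{K}$, which I take as the required distributive lattice of worlds. I would define $\Phi : L \to \FUNC{W}{\TV}[\land] \cong \Filt*{W}$ by sending $d \in L$ to (the filter corresponding to) $\textsf{K}_d = \SetComp{c \in K}{c \sqsubseteq d}$. This is well-defined: $\textsf{K}_d$ is down-closed in $K$, hence up-closed in $W$; it contains $\bot_L \in K$, which is the top of $W$; and if $c, c' \in \textsf{K}_d$ then $c \lor c' \in K$ (as $K$ is a sublattice) and $c \lor c' \sqsubseteq d$, so $\textsf{K}_d$ is closed under the meets of $W$. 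In other words $\textsf{K}_d$ is an ideal of $K$.

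It then remains to show that $\Phi$ is an order-isomorphism, which by general frame theory automatically upgrades to a frame isomorphism. Monotonicity is immediate. Order-reflection and surjectivity both use algebraicity of $L$: from $\textsf{K}_d \subseteq \textsf{K}_{d'}$ we get $d = \DirectedSup \textsf{K}_d \sqsubseteq \DirectedSup \textsf{K}_{d'} = d'$; and given any filter of $W$, i.e.\ an ideal $I \subseteq K$, the element $d := \DirectedSup I$ (directed because $I$ is) satisfies $\textsf{K}_d = I$ — the inclusion $\supseteq$ is clear, and if $c \in K$ with $c \sqsubseteq \DirectedSup I$ then compactness of $c$ yields $c \sqsubseteq i$ for some $i \in I$, whence $c \in I$. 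This gives $L \cong \FUNC{W}{\TV}[\land]$ with $W$ distributive.

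The only genuine work is in the converse, and the point to be careful about is the variance bookkeeping: $\Compact{L}$ carries the order of $L$, and it corresponds to the \emph{opposite} of the world lattice, so throughout one must read ``filter over $W$'' as ``ideal of $\Compact{L}$''. A minor point to pin down is the precise convention for ``coherent'': one wants $\top_L$ (and $\bot_L$) to be compact so that $\Compact{L}$ is a bounded sublattice and $W = \Op{\Compact{L}}$ is a bona fide distributive lattice with $0$ and $1$ — this is part of the standard definition, and it holds automatically for the lattices $\FUNC{W}{\TV}[\land]$ produced by the forward direction.
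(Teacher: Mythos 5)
Your proof is correct, and it is exactly the argument the paper has in mind: the paper does not prove this theorem itself but cites Johnstone [\S II.3.2], and the remark immediately following the statement (``a coherent frame $L$ is isomorphic to the filters $\Filt*{\Compact{L}}$ of its compact elements'') is precisely the isomorphism $d \mapsto \textsf{K}_d$ you construct, with the same filter-vs-ideal variance bookkeeping. Your explicit attention to $\top_L$ and $\bot_L$ being compact, so that $\Op{\Compact{L}}$ is a bounded distributive lattice, is a point the paper leaves implicit in the phrase ``sub-lattice,'' and is worth recording.
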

\noindent This theorem says that a coherent frame $L$ is isomorphic to the
filters $\Filt*{\Compact{L}}$ of its compact elements.


Finally, the fact that every element can be reconstructed as a supremum of
compact elements means that it is possible to canonically extend any monotonic
$f : W \to W'$ that \emph{preserves finite joins} to a monotonic,
join-preserving $\FUNC{\Op{W}}{\TV} \to W'$, as long as $W'$ is a complete
lattice. Diagrammatically, in the situation
\begin{equation}
  \label{diagram:cont-extension-poset}
  \begin{tikzpicture}[node distance=2.5cm, on grid, baseline=(current  bounding  box.center)]
    \node (C) {$W$};
    \node (PSHC) [right = 4cm of C] {$\FUNC{\Op{W}}{\TV}[\land]$};
    \node (E) [below = 2cm of PSHC] {$W'$};
    \path[->] (C) edge node [above] {$\UpSet$} (PSHC);
    \path[->] (C) edge node [below = .5em] {$f$} (E);
    \path[->, dashed] (PSHC) 
      edge 
        node [left] {$\LKan{f}$} 
        node [right = .15em] {$\Adjoint$}
      (E);
    \path[->, bend right = 45, dotted] (E) edge node [right] {$\NerveS{f}$} (PSHC);
  \end{tikzpicture}
\end{equation}
there exists a unique $\LKan{f}$ which preserves all joins and satisfies
$\LKan{f}\prn{\UpSet w} = f(w)$. It is given by
\[
  \LKan{f}\prn{S} \defeq \Sup \SetComp{ f(w) }{ w \in S }
\]
We call $\LKan{f}$ the \emph{Scott-continuous extension} of $f$ along $\UpSet$.
This follows from a much more general theorem: if $W$ is merely a poset and $W'$
a dcpo, then $\FUNC{\Op{W}}{\TV}[\land]$ is a dcpo, and there is a unique
Scott-continuous $\LKan{f}$ that makes \eqref{diagram:cont-extension-poset}
commute \cite[Prop. 2.2.24]{abramsky_1994}. However, if $W$ already has finite
joins, then $\FUNC{\Op{W}}{\TV}$ is a complete lattice. The reason is every join
can be written as  a directed supremum of non-empty finite ones. Then, if $f$
preserves finite joins, $\LKan{f}$ preserves all of them. 
As $\FUNC{\Op{W}}{\TV}[\land]$ is complete, it has a right adjoint $\NerveS{f}$,
by the adjoint functor theorem \cite[\S 7.34]{davey_2002} \cite[\S
I.4.2]{johnstone_1982}.

Suppose then that we start with a stable model $(W, \sqsubseteq, V)$. By taking
its filters we then obtain a spectral locale $\FUNC{W}{\TV}[\land]$. Defining
$\sem{p} = V(p)$ we obtain a model of intuitionistic logic which interprets
every formula $\varphi$ as a filter $\sem{\varphi} \in \FUNC{W}{\TV}[\land]$,
namely the filter of worlds in which it is true:
\begin{proposition}
  \label{proposition:stable-vs-algebraic-1}
  $\KSat{w}{\varphi}$ if and only if $w \in \sem{\varphi}$.
\end{proposition}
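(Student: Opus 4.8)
The plan is to prove Proposition~\ref{proposition:stable-vs-algebraic-1} by induction on the structure of $\varphi$, establishing that the syntactic interpretation $\sem{\varphi} \in \FUNC{W}{\TV}[\land]$ coincides, as a subset of $W$, with the semantic truth set $\SetComp{w \in W}{\KSat{w}{\varphi}}$. Since both sides live in the complete Heyting algebra $\FUNC{W}{\TV}[\land]$, and since Lemma~\ref{lemma:filtering} already tells us the truth set is a filter (upper, contains $1 = \top$, closed under binary meets), the right-hand side is a legitimate element of that algebra, so the statement is well-typed. For the base case $\varphi = p$ the claim is the definition $\sem{p} = V(p)$ together with $V(p) = \SetComp{w}{\KSat{w}{p}}$.

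For the inductive step I would run through the connectives one at a time, in each case matching the algebraic operation in $\FUNC{W}{\TV}[\land]$ against the corresponding semantic clause. For $\top$: $\sem{\top}$ is the top filter $W$, and every world forces $\top$. For $\land$: $\sem{\varphi \land \psi} = \sem{\varphi} \cap \sem{\psi}$ (meets are intersections), which by IH is exactly the set of worlds forcing both conjuncts, i.e.\ forcing $\varphi \land \psi$. For $\to$: the exponential in $\FUNC{W}{\TV}[\land]$ was given explicitly as $\HeyExp{F_1}{F_2} = \SetComp{w \in W}{\forall w \sqsubseteq v.\ v \in F_1 \text{ implies } v \in F_2}$; substituting the IH turns this verbatim into the forcing clause for $\varphi \to \psi$. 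The two interesting cases are $\bot$ and $\lor$. For $\bot$: $\sem{\bot}$ is the bottom element of $\FUNC{W}{\TV}[\land]$, which was identified as the principal filter $\{\top\} = \UpSet{1}$; this is precisely $\SetComp{w}{w = 1}$ since $1$ is the top of $(W,\sqsubseteq)$, matching the clause $\KSat{w}{\bot} \defequiv (w = 1)$. For $\lor$: $\sem{\varphi \lor \psi} = \sem{\varphi} \lor \sem{\psi}$, and the join of two filters was computed as $F_1 \lor F_2 = \UpSet{\SetComp{a \land b}{a \in F_1, b \in F_2}}$; so $w \in \sem{\varphi} \lor \sem{\psi}$ iff there are $a \in \sem{\varphi}$, $b \in \sem{\psi}$ with $a \land b \sqsubseteq w$, which by IH is exactly $\exists v_1, v_2.\ v_1 \land v_2 \sqsubseteq w$, $\KSat{v_1}{\varphi}$, $\KSat{v_2}{\psi}$ --- the disjunction clause.

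The main obstacle, such as it is, is the disjunction case: one must be slightly careful that the description of $F_1 \lor F_2$ as the upward closure of $\SetComp{a \land b}{a \in F_1, b \in F_2}$ really does unfold to the existential-with-fan-in clause and not to something stronger or weaker. The subtlety is that membership $w \in \UpSet{S}$ means $\exists s \in S.\ s \sqsubseteq w$, and here $s$ ranges over meets $a \land b$; this is a genuine existential over pairs, not a single canonical witness, which is exactly why the semantic clause quantifies over $v_1, v_2$ rather than setting $w = v_1 \land v_2$ (cf.\ the footnote). One should also double-check the edge case where one disjunct is never forced: if, say, $\sem{\psi} = W$ is the top filter (e.g.\ $\psi = \top$), then $\SetComp{a \land b}{a \in \sem{\varphi}, b \in W}$ contains $a \land 1 = a$ for every $a \in \sem{\varphi}$, so the join collapses to $\sem{\varphi}$, consistent with $\KSat{w}{\varphi \lor \top} \iff \KSat{w}{\varphi}$. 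Beyond this, every case is a direct substitution of the inductive hypothesis into the explicit formulas for the cHA operations recorded in \S\ref{section:stable}, so the proof is essentially bookkeeping once the base case and the $\bot$/$\lor$ identifications are pinned down.
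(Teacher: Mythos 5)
Your proof is correct and is exactly the routine induction the paper intends (it states the proposition without proof): each connective's forcing clause is, by design, the unfolding of the corresponding explicit operation on filters recorded in \S\ref{section:stable}, and you have matched them case by case, including the only mildly delicate $\bot$ and $\lor$ identifications. Nothing further is needed.
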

%

In view of this proposition,
\begin{theorem}[Soundness]
  The stable semantics is sound for intuitionistic logic.
\end{theorem}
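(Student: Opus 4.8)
The plan is to reduce the soundness theorem to the proposition immediately preceding it, together with the fact that $\FUNC{W}{\TV}[\land]$ is a model of intuitionistic logic in the algebraic sense. Concretely, suppose $\varphi$ is a theorem of intuitionistic propositional logic and let $\Model{M} = (W, \sqsubseteq, V)$ be an arbitrary stable model. By the discussion preceding Proposition~\ref{proposition:stable-vs-algebraic-1}, the assignment $\sem{p} \defeq V(p)$ extends to an interpretation of every formula as an element of $\FUNC{W}{\TV}[\land]$, which we have established is a complete Heyting algebra (indeed a spectral locale). Since intuitionistic logic is sound for the algebraic semantics given by any Heyting algebra, we have $\sem{\varphi} = \top = W$ in $\FUNC{W}{\TV}[\land]$. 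By Proposition~\ref{proposition:stable-vs-algebraic-1}, $\KSat[\Model{M}]{w}{\varphi}$ holds iff $w \in \sem{\varphi} = W$, which holds for every $w \in W$. Hence $\varphi$ is valid in $\Model{M}$, and since $\Model{M}$ was arbitrary, $\varphi$ is valid in the stable semantics.

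First I would make precise what ``sound'' means here: that every formula derivable in the intuitionistic propositional calculus (and, if we are tracking it, every rule of inference preserves validity, where validity means truth at every world of every stable model). The cleanest route is to observe that validity in a single stable model $\Model{M}$ corresponds exactly, via Proposition~\ref{proposition:stable-vs-algebraic-1}, to the formula being interpreted as the top element of the Heyting algebra $\FUNC{W}{\TV}[\land]$; so the stable semantics inherits soundness from the algebraic semantics verbatim. I would then simply cite the standard fact that the Heyting-algebra semantics is sound for intuitionistic logic, i.e.\ if $\vdash \varphi$ then $\sem{\varphi} = \top$ in every Heyting algebra.

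One subtlety worth flagging is that Proposition~\ref{proposition:stable-vs-algebraic-1} must be invoked on the \emph{nose}: it tells us that the relation $\KSat{w}{\varphi}$, defined by the idiosyncratic clauses for $\bot$ and $\lor$, genuinely coincides with membership in the algebraically-computed $\sem{\varphi}$. In particular this is where the peculiar disjunction clause ``$\exists v_1, v_2.\ v_1 \land v_2 \sqsubseteq w$ and $\ldots$'' is reconciled with the join $F_1 \lor F_2 = \UpSet{\SetComp{a \land b}{a \in F_1, b \in F_2}}$ in $\FUNC{W}{\TV}[\land]$, and similarly $\bot$ with the bottom filter $\{\top\}$ (whose only element is the top world $1$). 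Everything then goes through because the Filtering lemma (Lemma~\ref{lemma:filtering}) guarantees that each $\sem{\varphi}$ really is a filter, so the algebraic interpretation is well-defined.

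There is no genuine obstacle: the real mathematical content has already been discharged in establishing that $\FUNC{W}{\TV}[\land]$ is a complete Heyting algebra and in Proposition~\ref{proposition:stable-vs-algebraic-1}. The only thing to be careful about is not to re-prove soundness by a direct induction on derivations in the stable semantics --- that would be possible but wasteful --- and instead to route it through the algebraic model, which is a one-line argument once the proposition is in hand.
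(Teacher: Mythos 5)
Your proof is correct and follows exactly the paper's (implicit) argument: the theorem is stated with the preamble ``In view of this proposition,'' meaning soundness is obtained by combining Proposition~\ref{proposition:stable-vs-algebraic-1} with the fact that $\FUNC{W}{\TV}[\land]$ is a complete Heyting algebra and that the algebraic semantics is sound. Your elaboration of the details (including the role of the Filtering lemma in ensuring each $\sem{\varphi}$ is a filter) matches the intended reasoning.
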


\subsection{Completeness}
\label{section:int-completeness}

Revisiting the remarks of \S\ref{section:int-models-as-frames} we can prove
that completeness of the stable semantics implies completeness of the algebraic
semantics and vice versa. One direction works exactly as it would for Kripke
semantics:
\begin{theorem}
  \label{theorem:stable-to-alg}
  Completeness of the stable semantics implies completeness of the algebraic
  semantics.
\end{theorem}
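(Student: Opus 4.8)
The plan is to run the routine ``easy direction'' of the equi-completeness argument: push validity from Heyting algebras down to stable models through the spectral locale construction of the previous pages, using Proposition~\ref{proposition:stable-vs-algebraic-1} as the bridge.

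First I would pin down the two notions of validity. A formula $\varphi$ is \emph{algebraically valid} if $\sem{\varphi} = \top$ in every Heyting algebra under every valuation of the propositional variables; it is \emph{stably valid} if $\KSat[\Model{M}]{w}{\varphi}$ holds for every world $w$ of every stable model $\Model{M}$, which by Lemma~\ref{lemma:filtering}(i) is the same as demanding $\KSat[\Model{M}]{0}{\varphi}$ at the baseline world. Both notions are closed under the inference rules of intuitionistic logic, so ``completeness of the algebraic (resp. stable) semantics'' reads: validity implies provability. The content of the theorem is therefore: stable validity of every stably valid formula being provable entails algebraic validity of every algebraically valid formula being provable.

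The core step is the observation that every stable model induces an algebraic model with matching semantics. Given a stable model $\Model{M} = (W, \sqsubseteq, V)$, the poset $\FUNC{W}{\TV}[\land] \cong \Filt*{W}$ is a complete Heyting algebra (in fact a spectral locale), and setting $\sem{p} \defeq V(p)$ turns it into a model of intuitionistic logic interpreting each $\varphi$ as a filter $\sem{\varphi}$. Proposition~\ref{proposition:stable-vs-algebraic-1} gives $\KSat[\Model{M}]{w}{\varphi}$ iff $w \in \sem{\varphi}$. Since the top element of $\FUNC{W}{\TV}[\land]$ is the filter $W$ itself, we get $\sem{\varphi} = \top$ precisely when $\KSat[\Model{M}]{w}{\varphi}$ holds for all $w \in W$; that is, $\varphi$ is valid in the cHA $\FUNC{W}{\TV}[\land]$ under the valuation $V$ exactly when $\varphi$ is valid in the stable model $\Model{M}$. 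Assembling: suppose the stable semantics is complete and $\varphi$ is algebraically valid. For an arbitrary stable model $\Model{M} = (W, \sqsubseteq, V)$, the Heyting algebra $\FUNC{W}{\TV}[\land]$ validates $\varphi$ under $\sem{p} = V(p)$, hence by the previous sentence $\varphi$ is valid in $\Model{M}$. As $\Model{M}$ was arbitrary, $\varphi$ is stably valid, so provable by completeness of the stable semantics.

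I do not expect a genuine obstacle here: this is the easy half, and all the real work is already in Proposition~\ref{proposition:stable-vs-algebraic-1} and in the verification that $\FUNC{W}{\TV}[\land]$ is a cHA (Theorem~\ref{theorem:johnstone}). The only points needing a line of care are the identification of the top of $\FUNC{W}{\TV}[\land]$ with $W$ — which makes ``$\sem{\varphi}=\top$ in the locale'' coincide with ``$\varphi$ true at every world'' — and the remark that we never need \emph{all} Heyting algebras, only the spectral locales arising from stable frames, so no choice principle enters. The hard direction is the converse (Theorem~\ref{theorem:alg-to-stable} below), which demands embedding an arbitrary distributive lattice, e.g. the Lindenbaum algebra, into a spectral locale while preserving every connective.
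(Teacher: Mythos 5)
Your proposal is correct and follows essentially the same route as the paper: algebraic validity specialises to the spectral locale $\FUNC{W}{\TV}[\land]$ of an arbitrary stable model with $\sem{p}=V(p)$, Proposition~\ref{proposition:stable-vs-algebraic-1} together with the identification of the top element with $W$ transfers this to truth at every world, and completeness of the stable semantics yields provability. The extra remarks (reduction to the baseline world $0$ via Lemma~\ref{lemma:filtering}(i), and the observation that only spectral locales are needed) are correct but not required.
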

\begin{proof}
  Suppose $\sem{\varphi}_H = 1$ for every Heyting algebra $H$ and any
  interpretation $\sem{p}_H \in H$ of the propositions. Hence, given any stable
  model $(W, \sqsubseteq, V)$ we have that $\sem{\varphi}_{\FUNC{W}{\TV}[\land]}
  = 1 = W$ where $\sem{p}_{\FUNC{W}{\TV}[\land]} = V(p)$. But Prop.
  \ref{proposition:stable-vs-algebraic-1} then implies that $\KSat{w}{\varphi}$ for
  all $w \in W$. By completeness of the stable semantics, $\vdash \varphi$.
\end{proof}

However, the filter construction enables a proof of the other direction as well.
We have an embedding
\[
  \UpSet[\Op{H}] : H \to \FUNC{\Op{H}}{\TV}[\land]
\]
of any Heyting algebra $H$ into the cHA of filters of $\Op{H}$ which is a
Heyting homomorphism by Lemma \ref{lemma:upper-preserves}. It is worth pausing
for a moment to ponder that a filter on $\Op{H}$ is in fact an \emph{ideal} of
$H$, viz. a lower set that is a sub-join-semilattice. Thus, $\UpSet[\Op{H}]$
sends $x \in H$ to the \emph{principal ideal} $\SetComp{ y \in H }{ y \leq x }$
of $x$ in $H$.

Suppose we have a Heyting algebra $H$, and some interpretation of propositional
variables $\sem{p}_{H} \in H$. Define an interpretation into
$\FUNC{\Op{H}}{\TV}[\land]$ starting from $\sem{p}_{\FUNC{\Op{H}}{\TV}[\land]}
\defeq \UpSet*[\Op{H}]{\sem{p}_{H}}$. Then, by Lemma
\ref{lemma:upper-preserves},
\begin{proposition}
  \label{proposition:upset-filters}
  $
  \sem{\varphi}_{\FUNC{\Op{H}}{\TV}[\land]}
  =
  \UpSet[\Op{H}]{\sem{\varphi}_{H}}
  =
  \SetComp{ y \in H }{ y \leq \sem{\varphi}_{H}}
  $
\end{proposition}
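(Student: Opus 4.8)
The plan is to argue by structural induction on $\varphi$, with Lemma \ref{lemma:upper-preserves} supplying essentially the entire content. The point is that $\UpSet[\Op{H}] : H \to \FUNC{\Op{H}}{\TV}[\land]$ is a homomorphism of (complete) Heyting algebras: instantiating Lemma \ref{lemma:upper-preserves} at $W = \Op{H}$, whose domain $\Op{W}$ is $H$ itself, gives that $\UpSet[\Op{H}]$ preserves finite and infinite meets of $H$, \textbf{finite joins} of $H$, and exponentials of $H$. Since the interpretation $\sem{-}_{\FUNC{\Op{H}}{\TV}[\land]}$ is by definition the homomorphic extension of the valuation $\sem{p}_{\FUNC{\Op{H}}{\TV}[\land]} \defeq \UpSet[\Op{H}]{\sem{p}_H}$ following the (complete) Heyting structure of $\FUNC{\Op{H}}{\TV}[\land]$, the two sides $\sem{\varphi}_{\FUNC{\Op{H}}{\TV}[\land]}$ and $\UpSet[\Op{H}]{\sem{\varphi}_H}$ are both homomorphic in $\varphi$ and agree on propositional variables, hence agree everywhere. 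The final equality with $\SetComp{y \in H}{y \leq \sem{\varphi}_H}$ is then just the explicit description of the principal ideal $\UpSet[\Op{H}]{\sem{\varphi}_H}$ recorded just above the statement.

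Spelling out the induction: the base case is immediate by definition of $\sem{p}_{\FUNC{\Op{H}}{\TV}[\land]}$. For $\top$, the top of $\FUNC{\Op{H}}{\TV}[\land]$ is the empty meet, and $\UpSet[\Op{H}]$ preserves finite meets, so $\UpSet[\Op{H}]{\top_H}$ is that top element. For $\varphi \land \psi$ and $\varphi \to \psi$ we invoke preservation of binary meets and of exponentials respectively, together with the induction hypothesis applied to the immediate subformulas. For $\varphi \lor \psi$ we use that $\UpSet[\Op{H}]$ preserves binary joins of $H$, so $\UpSet[\Op{H}]{\sem{\varphi}_H \lor_H \sem{\psi}_H}$ equals the join in $\FUNC{\Op{H}}{\TV}[\land]$ of $\UpSet[\Op{H}]{\sem{\varphi}_H}$ and $\UpSet[\Op{H}]{\sem{\psi}_H}$, which by the induction hypothesis is $\sem{\varphi \lor \psi}_{\FUNC{\Op{H}}{\TV}[\land]}$. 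For $\bot$, the bottom element of $\FUNC{\Op{H}}{\TV}[\land]$ is the empty join, which unfolds to the least filter of $\Op{H}$, namely $\{\top_{\Op{H}}\} = \{\bot_H\}$; since $\UpSet[\Op{H}]$ preserves finite joins this is exactly $\UpSet[\Op{H}]{\bot_H}$, and it also literally equals $\SetComp{y \in H}{y \leq \bot_H}$.

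The only thing that needs genuine care is bookkeeping the two opposites simultaneously: meets and joins in $\Op{H}$ are joins and meets in $H$, the exponential used in Lemma \ref{lemma:upper-preserves} is that of $\Op{W} = H$, and the bottom filter described in \S\ref{section:stable} as $\{\top\}$ refers to the top of the frame $\Op{H}$, i.e.~the \emph{bottom} of $H$. Once the indices are aligned there is no real obstacle: every inductive case reduces to a single appeal to Lemma \ref{lemma:upper-preserves} and the induction hypothesis.
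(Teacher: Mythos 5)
Your proposal is correct and matches the paper exactly: the paper gives no explicit proof, simply asserting the proposition "by Lemma \ref{lemma:upper-preserves}" after fixing the valuation $\sem{p}_{\FUNC{\Op{H}}{\TV}[\land]} \defeq \UpSet[\Op{H}]{\sem{p}_H}$, and your structural induction (with the careful variance bookkeeping for $\bot$ and $\top$) is precisely the argument being left implicit.
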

We are now in a position to prove that
\begin{theorem}
  \label{theorem:alg-to-stable}
  Completeness of the algebraic semantics implies completeness of the stable
  semantics.
\end{theorem}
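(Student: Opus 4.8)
The plan is to run the argument of Theorem~\ref{theorem:stable-to-alg} in reverse, using the embedding $\UpSet[\Op{H}]$ as the bridge between the two semantics. Fix an arbitrary Heyting algebra $H$ together with an interpretation $\sem{p}_H \in H$ of the propositional variables. First I would record that $\Op{H}$ is a legitimate stable frame: $H$ is a bounded distributive lattice, and both boundedness and the distributive law are self-dual, so $\Op{H}$ is again a bounded distributive lattice. Next, put $V(p) \defeq \UpSet[\Op{H}]{\sem{p}_H} = \SetComp{y \in H}{y \leq \sem{p}_H}$; as noted in the excerpt this is the principal ideal of $\sem{p}_H$ in $H$, equivalently a principal filter of $\Op{H}$, hence an element of $\Filt*{\Op{H}}$. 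This data assembles into a stable model $\Model{M} = (\Op{H}, \sqsubseteq, V)$.

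The heart of the matter is to identify the stable semantics of $\Model{M}$ with the order of $H$. Chaining Proposition~\ref{proposition:stable-vs-algebraic-1} (which relates $\KSat[\Model{M}]{w}{\varphi}$ to membership in the algebraic interpretation $\sem{\varphi}_{\FUNC{\Op{H}}{\TV}[\land]}$) with Proposition~\ref{proposition:upset-filters} (which, via Lemma~\ref{lemma:upper-preserves}, computes $\sem{\varphi}_{\FUNC{\Op{H}}{\TV}[\land]} = \UpSet[\Op{H}]{\sem{\varphi}_H}$), I obtain
\[
  \KSat[\Model{M}]{w}{\varphi}
  \quad\text{iff}\quad
  w \in \SetComp{y \in H}{y \leq \sem{\varphi}_H}
  \quad\text{iff}\quad
  w \leq \sem{\varphi}_H \text{ in } H.
\]
In particular, the bottom world $0$ of the stable frame $\Op{H}$ is the top element $1$ of $H$, so $\KSat[\Model{M}]{0}{\varphi}$ holds precisely when $\sem{\varphi}_H = 1$.

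Now suppose $\varphi$ is valid in the stable semantics, i.e. $\KSat{w}{\varphi}$ for every world $w$ of every stable model --- equivalently, by part~(i) of Lemma~\ref{lemma:filtering}, true at the bottom world of every stable model. Instantiating this at $\Model{M}$ and its bottom world yields $\sem{\varphi}_H = 1$. Since $H$ and the interpretation $\sem{-}_H$ were arbitrary, $\varphi$ evaluates to $1$ in every Heyting algebra under every interpretation, so completeness of the algebraic semantics gives $\vdash \varphi$, as required.

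I do not anticipate a real obstacle: the two cited propositions carry the weight of the argument, and the remaining steps are bookkeeping --- confirming that $\Op{H}$ is a bounded distributive lattice, that the principal ideals of $H$ are filters of $\Op{H}$, and that ``validity in a stable model'' may be taken to mean truth at the bottom world by monotonicity. The one conceptual point worth flagging in the write-up is that, in contrast with the Esakia-style route, this direction never invokes the prime filter existence theorem: the filters used here are the wholly explicit principal ones, so the argument is constructive.
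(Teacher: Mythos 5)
Your proof is correct and follows essentially the same route as the paper's: build the stable model $(\Op{H}, \sqsubseteq, V)$ with $V(p) = \UpSet[\Op{H}]{\sem{p}_H}$, chain Propositions \ref{proposition:stable-vs-algebraic-1} and \ref{proposition:upset-filters}, and then instantiate validity to conclude $\sem{\varphi}_H = 1$. The only cosmetic difference is that you extract this last step at the bottom world of the frame (the top element of $H$), whereas the paper concludes $H = \SetComp{y \in H}{y \leq \sem{\varphi}_H}$ and leaves the instantiation at $1 \in H$ implicit.
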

\begin{proof}
  Suppose $\varphi$ is valid in every stable model. By completeness of the
  algebraic semantics, it suffices to show that $\sem{\varphi}_{H} = 1$ for any
  Heyting algebra $H$, and any interpretation $\sem{p}_H \in H$, as this implies
  $\vdash \varphi$. Consider then the stable model $(\Op{H}, \sqsubseteq, V)$
  where $V(p) = \UpSet*[\Op{H}]{\sem{p}_H}$. As $\varphi$ is valid in this
  model, $\KSat{x}{\varphi}$ for every $x \in H$. By Proposition
  \ref{proposition:stable-vs-algebraic-1} and Proposition
  \ref{proposition:upset-filters} we get that $H =
  \sem{\varphi}_{\FUNC{\Op{H}}{\TV}[\land]} = \SetComp{y \in H}{y \leq
  \sem{\varphi}_{H}}$. 
\end{proof}

\subsection{Morphisms}
  \label{section:stable-morphisms}

We briefly consider what it means to have a morphism $f : W \to W'$ of stable
frames. We would like such morphisms to induce a map $\Pre{f} :
\FUNC{W'}{\TV}[\land] \to \FUNC{W}{\TV}[\land]$ by $\Pre{f}(F) = \SetComp{ v \in
W'}{f(v) \in F}$. To conclude that $\Pre{f}(F)$ is a filter we need to know that
$f$ is monotonic, and that it preserves finite meets. Such maps warrant their
own name, which we borrow from the literature on stable domain theory
\cite{berry_1978}:
\begin{definition}
  A monotonic map $f : W \to W'$ is \emph{stable} just if it preserves finite
  meets. We define $\STABLE$ to be the category of distributive lattices and
  stable maps.
\end{definition}
\noindent Unlike the category $\DLATT$ of distributive lattices, the morphisms
of $\STABLE$ need not preserve disjunctions.

It is straightforward to show that when $f$ preserves finite meets, $\Pre{f}$ is
Scott-continuous and preserves arbitrary meets. This defines a functor
$\FUNC{-}{\TV}[\land] : \Op{\STABLE} \fto \COH$, where $\COH$ is the category of
coherent frames and Scott-continuous, meet-preserving morphisms. Note that this
is \emph{not} the usual category that is used in Stone duality, whose morphisms
are frame maps that preserve compact elements \cite[\S II.3.3]{johnstone_1982}.

It is not difficult to see that $\FUNC{-}{\TV}[\land]$ is an equivalence. On
objects this is guaranteed by Theorem \ref{theorem:johnstone}. On morphisms, it
suffices to spot that every $\Pre{f} : \FUNC{W'}{\TV}[\land] \to
\FUNC{W}{\TV}[\land]$ preserves meets, and hence has a left adjoint $\LKan{f}$
by the adjoint functor theorem. It is then simple to show that left adjoints
preserve compact elements, so that $f$ can be extracted by restricting
$\LKan{f}$ to $\Compact{\FUNC{W}{\TV}[\land]} \cong W$. This leads to a duality
\begin{equation}
  \label{equation:stable-vs-coh}
  \Op{\STABLE} \Equiv \COH
\end{equation}
Weaker versions of this duality are well-known, see e.g. \cite[\S
IV-1.16]{gierz_2003} for a duality between meet-semilattices and algebraic
lattices, as well as references to it in the literature.

However, stable morphisms do not preserve truth. For that, we need to refine the
above duality to maps that are stable, open, surjective, and also
\emph{L-morphisms} in the sense of Bezhanishvili et al. \cite[\S
2]{bezhanishvili_2024}, which appropriately preserve disjunction. The details
are similar to those in my previous article \cite{kavvos_2024a}.

\section{Stable semantics of modal logic}
  \label{section:stable-modal}

I have previously argued that a canonical Kripke semantics for intuitionistic
modal logic is given by a \emph{bimodule}, i.e. a monotonic function $R : \Op{W}
\times W \to \TV$ over a Kripke frame $(W, \sqsubseteq)$ \cite{kavvos_2024a}. In
this section I will adapt this to the case where $(W, \sqsubseteq)$ is a stable
frame.
\begin{definition}
  A \emph{stable bimodule on $W$} is a bimodule $R : \Op{W} \times W \to \TV$
  that additionally satisfies the following \emph{stability conditions}:
  \begin{enumerate}
    \item
      $
          w \mathbin{R} v_1\,
          \text{ and }\,
          w \mathbin{R} v_2
        \Longrightarrow\
          w \mathbin{R} (v_1 \land v_2)
      $
    \item
      $w \mathbin{R} 1$
    \item
      $
        (w_1 \land w_2) \mathbin{R} v\
          \Longrightarrow
        \exists v_1, v_2.\ v_1 \land v_2 \sqsubseteq v
          \text{ and } w_1 \mathbin{R} v_1, w_2 \mathbin{R} v_2
      $
    \item
      $
        1 \mathbin{R} v\ \Longleftrightarrow v = 1
      $
  \end{enumerate}
\end{definition}
A bimodule is a relation $R \subseteq W \times W$ with the property that
$w' \sqsubseteq w \mathbin{R} v \sqsubseteq v'$ implies $w' \mathbin{R} v'$.
This automatically implies the converses of (i) and (iii). Condition (ii) is
redundant, as it is implied by (iv) and the bimodule conditions, but we keep it
for symmetry. A \emph{modal stable frame} $(W, \sqsubseteq, R)$ comprises a
stable frame $(W, \sqsubseteq)$ and a stable bimodule $R$.

Stability conditions (i) and (ii) ensure that abstracting the second variable
yields a monotonic map $\Lambda R : \Op{W} \to \FUNC{W}{\TV}[\land]$. Moreover,
stability conditions (iii) and (iv) ensure that $\Lambda R$ preserves finite
joins. Then $\lambda R$ induces the following adjunction by Scott-continuous
extension:
\begin{equation}
  \begin{tikzpicture}[node distance=2.5cm, on grid, baseline=(current  bounding  box.center)]
    \node (C) {$\Op{W}$};
    \node (PSHC) [right = 4cm of C] {$\FUNC{W}{\TV}[\land]$};
    \node (E) [below = 2cm of PSHC] {$\FUNC{W}{\TV}[\land]$};
    \path[->] (C) edge node [above] {$\UpSet$} (PSHC);
    \path[->] (C) edge node [below = .5em] {$\lambda R$} (E);
    \path[->, dashed] (PSHC) 
      edge 
        node [left] {$\blacklozenge_R$} 
        node [right = .15em] {$\Adjoint$}
      (E);
    \path[->, bend right = 45, dotted] (E) edge node [right] {$\Box_R$} (PSHC);
  \end{tikzpicture}
\end{equation}
Like in the case of Kripke semantics \cite{kavvos_2024a} it can be shown that
these maps are given by
\begin{align*}
  \blacklozenge_R(F)
    &\defeq \SetComp{ w \in W }{ \exists v.\ v \mathbin{R} w \text{ and } v \in
    F} &
  \Box_R(F)
    &\defeq \SetComp{ w \in W }{ \forall v.\ w \mathbin{R} v \text{ implies } v \in F }
\end{align*}
It is easy to show that both $\blacklozenge_R(F)$ and $\Box_R(F)$ are filters
whenever $F$ is; the proof of the first uses stability conditions (i) and (iv),
and the second uses stability conditions (iii) and (iv).

This directly leads to the following clauses of a stable semantics of the two
modalities:
\begin{align*}
  \KSat[\Model{M}]{w}{\blacklozenge \varphi}\
  &\defequiv\
    \exists v.\ v \mathbin{R} w \text{ and } \KSat[\Model{M}]{v}{\varphi}
  &
  \KSat[\Model{M}]{w}{\Box \varphi}\
  &\defequiv\
    \forall v.\ w \mathbin{R} v \text{ implies } \KSat[\Model{M}]{v}{\varphi}
\end{align*}
to which Proposition \ref{proposition:stable-vs-algebraic-1} readily extends. I
have neglected to mention what a \emph{modal stable model} $\Model{M} = (W,
\sqsubseteq, R, V)$ is: $(W, \sqsubseteq, R)$ is a modal stable frame, and
the valuation $V$ maps propositions into filters.

\subsection{Completeness}
\label{section:modal-completeness}

I have also previously argued that applying Kan extension to a bimodule
inescapably leads us to an intuitionistic modal logic with two adjoint
modalities, $\blacklozenge$ and $\Box$, as studied by Dzik et al.
\cite{dzik_2010,kavvos_2024a}. The two clauses of the stable semantics are
identical to the Kripke semantics in \emph{loc. cit}. But is the logic the same?
To answer that we have to reach for its algebraic models, which are Heyting
algebras $H$ equipped with two operators $\blacklozenge, \Box : H \to H$ that
form an adjunction $\blacklozenge \Adjoint \Box$. We have just seen that stable
bimodules on $W$ correspond precisely to such adjunctions on the cHA
$\FUNC{W}{\TV}[\land]$. As Proposition \ref{proposition:stable-vs-algebraic-1}
remains true if we include $\blacklozenge$ and $\Box$, we have that the stable
semantics is sound for the logic of Dzik et al. Furthermore,

\begin{theorem}
  Completeness of the modal stable semantics implies completeness of the modal
  algebraic semantics.
\end{theorem}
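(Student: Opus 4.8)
The plan is to mimic exactly the proof of Theorem~\ref{theorem:stable-to-alg}, only now carrying the two adjoint modalities along. Suppose $\varphi$ (a formula in the language with $\blacklozenge$ and $\Box$) satisfies $\sem{\varphi}_H = 1$ in every modal Heyting algebra $H$, i.e.~every Heyting algebra equipped with an adjunction $\blacklozenge \Adjoint \Box$, under any interpretation $\sem{p}_H \in H$ of the propositional variables. I want to conclude $\KSat{w}{\varphi}$ for all $w$ in every modal stable model, so that completeness of the modal stable semantics yields $\vdash \varphi$.

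First I would take an arbitrary modal stable model $\Model{M} = (W, \sqsubseteq, R, V)$. By the construction recalled just above, the stable bimodule $R$ induces an adjunction $\blacklozenge_R \Adjoint \Box_R$ on the complete Heyting algebra $\FUNC{W}{\TV}[\land]$, turning it into a modal Heyting algebra. Setting $\sem{p}_{\FUNC{W}{\TV}[\land]} \defeq V(p)$ and interpreting the connectives (including $\blacklozenge, \Box$) algebraically, the hypothesis gives $\sem{\varphi}_{\FUNC{W}{\TV}[\land]} = 1 = W$.

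Next I would invoke the extension of Proposition~\ref{proposition:stable-vs-algebraic-1} to the modal language, which was asserted in the preceding paragraph: $\KSat[\Model{M}]{w}{\psi}$ if and only if $w \in \sem{\psi}_{\FUNC{W}{\TV}[\land]}$, for every subformula $\psi$ of $\varphi$ and every $w \in W$. Applying this to $\varphi$ itself and using $\sem{\varphi}_{\FUNC{W}{\TV}[\land]} = W$ shows $\KSat[\Model{M}]{w}{\varphi}$ for all $w \in W$. Since $\Model{M}$ was arbitrary, $\varphi$ is valid in every modal stable model, and completeness of the modal stable semantics gives $\vdash \varphi$, as required.

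The only genuine content here is the correspondence between stable bimodules on $W$ and adjunctions on $\FUNC{W}{\TV}[\land]$, together with the modal extension of Proposition~\ref{proposition:stable-vs-algebraic-1}; both are established before the statement, so the proof itself is essentially a quotation of Theorem~\ref{theorem:stable-to-alg}'s argument. I expect no real obstacle: the main point to be careful about is that the algebraic interpretation of $\blacklozenge$ and $\Box$ in $\FUNC{W}{\TV}[\land]$ is precisely $\blacklozenge_R$ and $\Box_R$, so that the induction underlying the modal version of Proposition~\ref{proposition:stable-vs-algebraic-1} goes through in the modal cases — but this is exactly what was checked when extending that proposition.
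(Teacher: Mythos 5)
Your proposal is correct and matches the paper's proof, which is simply stated as being the same as that of Theorem~\ref{theorem:stable-to-alg}: interpret the modal stable model algebraically in $\FUNC{W}{\TV}[\land]$ via the adjunction induced by the stable bimodule, and transfer validity back using the modal extension of Proposition~\ref{proposition:stable-vs-algebraic-1}. You have merely written out explicitly what the paper leaves implicit.
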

The proof is the same as that of Theorem \ref{theorem:stable-to-alg}. For the
other direction we have to combine our work from intuitionistic logic, and the
argument from \S\ref{section:int-models-as-frames}. Given a Heyting algebra $H$
and an adjunction on it, the map $\UpSet \circ \mathop{\blacklozenge}$ preserves
finite joins, as both $\UpSet$ and $\blacklozenge$ do. Take its Scott-continuous
extension, $\blacklozenge_f$: 
\begin{equation}
  \label{diagram:continuous-modal-scott}
  \begin{tikzpicture}[node distance=2.5cm, on grid, baseline=(current  bounding  box.center)]
    \node (C) {$H$};
    \node (PSHC) [right = 6cm of C] {$\FUNC{\Op{H}}{\TV}[\land]$};
    \node (E) [below = 2cm of PSHC] {$\FUNC{\Op{H}}{\TV}[\land]$};
    \node (C2) [below = 2cm of C] {$H$};
    \path[->] (C) edge node [above] {$\UpSet$} (PSHC);
    \path[->] (C2) edge node [below] {$\UpSet$} (E);
    \path[->] (C) 
      edge 
        node [left] {$\blacklozenge$} 
        node [right = .05em] {$\Adjoint$}
      (C2);
    \path[->, bend right = 45] (C2) edge node [right] {$\Box$} (C);
    \path[->, dashed] (PSHC) 
      edge 
      node [left] {$\blacklozenge_f$} 
        node [right = .15em] {$\Adjoint$}
      (E);
      \path[->, bend right = 45, dotted] (E) edge node [right] {$\Box_f$} (PSHC);
  \end{tikzpicture}
\end{equation}
The map $\UpSet \circ \mathop{\blacklozenge}$ corresponds to a stable
$R_\blacklozenge : H \times \Op{H} \to \TV$, which maps $(x, y)$ to $1$ iff $y
\leq \blacklozenge x$ in $H$. This is by definition stable, but in any case that
is easy to verify manually---as long as one is careful about variance. For
example, to prove (iii), we need to show that whenever $y \leq \blacklozenge x_1
\lor \blacklozenge x_2$ there exist $y_1, y_2$ with $y \leq y_1 \lor y_2$ and
$y_1 \leq \blacklozenge x_1$ and $y_2 \leq \blacklozenge x_2$. It suffices to
take $y_i = y \land \blacklozenge x_i$ and use distributivity.

Diagram \eqref{diagram:continuous-modal-scott} commutes. For $\blacklozenge$ we
have that $\UpSet \circ \mathop{\blacklozenge} = \mathop{\blacklozenge_f} \circ
\UpSet$ by definition. For the $\Box$ we have
\[
  \Box_f\prn{\UpSet z}
  = \SetComp{ x \in H }{ \forall y.\, y \leq \blacklozenge x \text{ implies } y \leq z }
  = \SetComp{ x \in H }{ \blacklozenge x \leq z }
  = \SetComp{ x \in H }{ x \leq \Box z }
  = \UpSet{\Box z}
\]
Proposition \ref{proposition:upset-filters} extends to the modal case. We
therefore have
\begin{theorem}
  Completeness of the modal algebraic semantics implies completeness of the
  modal stable semantics.
\end{theorem}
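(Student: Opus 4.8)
The plan is to mirror the proof of Theorem~\ref{theorem:alg-to-stable} exactly, now carrying the modal operators along. Suppose $\varphi$ (possibly containing $\blacklozenge$ and $\Box$) is valid in every modal stable model. By completeness of the modal algebraic semantics, it suffices to show $\sem{\varphi}_H = 1$ for every Heyting algebra $H$ equipped with an adjunction $\blacklozenge \Adjoint \Box$ and every interpretation $\sem{p}_H \in H$. First I would fix such an $(H, \blacklozenge, \Box)$ and build the modal stable frame $(\Op{H}, \sqsubseteq, R_\blacklozenge)$, where $\sqsubseteq$ is $\geq$ in $H$ and $R_\blacklozenge$ is the stable bimodule defined above, sending $(x,y)$ to $1$ iff $y \leq \blacklozenge x$ in $H$. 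I would briefly re-record that $R_\blacklozenge$ is genuinely a stable bimodule---the bimodule condition is just monotonicity of $\blacklozenge$ and transitivity of $\leq$, stability conditions (i), (ii) follow from $\blacklozenge$ preserving $\lor$ and $0$ (being a left adjoint it preserves all joins), (iii) is the distributivity argument already spelled out ($y_i = y \land \blacklozenge x_i$), and (iv) follows since $\blacklozenge 0 = 0$ forces $y \leq \blacklozenge 1$ to hold for all $y$ only when read in $\Op{H}$, with the converse as noted.

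Next I would equip this frame with the valuation $V(p) = \UpSet[\Op{H}]{\sem{p}_H}$, which is a filter on $\Op{H}$ (equivalently a principal ideal of $H$) by Lemma~\ref{lemma:upper-preserves}, giving a bona fide modal stable model. The heart of the argument is the modal extension of Proposition~\ref{proposition:stable-vs-algebraic-1}, which is asserted in the excerpt to hold once $\blacklozenge$ and $\Box$ are included: under it, $\KSat[\Model{M}]{x}{\varphi}$ iff $x \in \sem{\varphi}_{\FUNC{\Op{H}}{\TV}[\land]}$, where the algebraic interpretation on the spectral locale starts from $\sem{p} = V(p)$ and uses the adjoint pair $(\blacklozenge_R, \Box_R)$ induced by $R_\blacklozenge$. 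I would then invoke the fact, established via the commuting square \eqref{diagram:continuous-modal-scott}, that $(\blacklozenge_R, \Box_R)$ agrees with the Scott-continuous extension $(\blacklozenge_f, \Box_f)$ of $\UpSet \circ \blacklozenge$ and its right adjoint---so that $\UpSet[\Op{H}] : H \to \FUNC{\Op{H}}{\TV}[\land]$ is a \emph{modal} Heyting homomorphism, preserving $\blacklozenge$ and $\Box$ on the nose as computed in the excerpt. Combining this with the non-modal Proposition~\ref{proposition:upset-filters} and a routine induction, I get the modal analogue: $\sem{\varphi}_{\FUNC{\Op{H}}{\TV}[\land]} = \UpSet[\Op{H}]{\sem{\varphi}_H} = \SetComp{y \in H}{y \leq \sem{\varphi}_H}$.

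To conclude, validity of $\varphi$ in the model $(\Op{H}, \sqsubseteq, R_\blacklozenge, V)$ gives $\KSat{x}{\varphi}$ for every $x \in H$, hence by the modal Proposition~\ref{proposition:stable-vs-algebraic-1} that $H = \sem{\varphi}_{\FUNC{\Op{H}}{\TV}[\land]} = \SetComp{y \in H}{y \leq \sem{\varphi}_H}$. In particular $1 \in H$ satisfies $1 \leq \sem{\varphi}_H$, so $\sem{\varphi}_H = 1$. As $H$, the adjunction, and the valuation were arbitrary, $\varphi$ is valid in the modal algebraic semantics, and completeness of that semantics yields $\vdash \varphi$.

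The main obstacle is making sure the agreement $(\blacklozenge_R, \Box_R) = (\blacklozenge_f, \Box_f)$ is deployed correctly: one must check that the $R_\blacklozenge$ manufactured from $H$ is precisely the stable bimodule whose $\Lambda R$ is $\lambda R = \UpSet \circ \blacklozenge$ (up to the $\Filt*{W} \cong \FUNC{W}{\TV}[\land]$ identification and the opposite-order bookkeeping), so that the induced $\blacklozenge_R$ on $\FUNC{\Op{H}}{\TV}[\land]$ coincides with the Scott-continuous extension appearing in \eqref{diagram:continuous-modal-scott}; this is exactly the variance-sensitive point flagged in the excerpt. Everything else is bookkeeping that runs parallel to Theorems~\ref{theorem:stable-to-alg} and \ref{theorem:alg-to-stable}, so I would state it as "the proof is as before, combined with the argument of \S\ref{section:int-models-as-frames}" rather than re-deriving it.
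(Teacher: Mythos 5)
Your proposal is correct and follows essentially the same route as the paper, which sets up $R_\blacklozenge$, the commuting square \eqref{diagram:continuous-modal-scott}, and the modal extension of Proposition~\ref{proposition:upset-filters}, and then simply states that the proof proceeds as in Theorem~\ref{theorem:alg-to-stable}. Your write-up fills in exactly the bookkeeping the paper leaves implicit, including the variance-sensitive identification of $(\blacklozenge_R,\Box_R)$ with $(\blacklozenge_f,\Box_f)$.
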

The proof is also like that of Theorem \ref{theorem:alg-to-stable}. Thus, the
modal stable semantics is sound and complete for the intuitionistic modal logic
of Dzik et al. \cite{dzik_2010}.

\subsection{Morphisms}
  \label{section:modal-morphisms}

The duality \eqref{equation:stable-vs-coh} of \S\ref{section:stable-morphisms}
can be restricted to a duality
\[
  \textbf{SBimod}^\text{op} \Equiv \textbf{CohO}
\]
The category on the left has distributive lattices equipped with a stable
bimodule as objects; and stable morphisms that preserve the bimodule as
morphisms. The category on the right has coherent frames $L$ equipped with a
meet-preserving operation $\Box_L : L \to L$ as objects; and Scott-continuous,
meet-preserving maps $h : L \to L'$ for which $h \Box_L \sqsubseteq \Box_{L'}
h$. In analogy with previous results this can be further refined to a duality
where the morphisms preserve truth on the left, and the operator and implication
on the right.

\section{Two-dimensional stable semantics of intuitionistic logic}
  \label{section:2dss-int}

Following the programme of two-dimensional Kripke semantics\cite{kavvos_2024a},
we look for categorifications of the stable semantics. Thus, we exchange stable
frames $(W, \sqsubseteq)$ for arbitrary categories $\CC$ with finite products;
we could call these \emph{stable categories}. The first thing we must categorify
is the notion of \emph{filter}. Surprisingly, there are two possible choices:
\begin{enumerate}
  \item the \emph{Ind-completion} $\IND{\CC}$, which adds all \emph{filtered
    colimits} to $\CC$ \cite[\S VI.1]{johnstone_1982} \cite[\S
    4.17]{adamek_2010}; and
  \item the \emph{Sind-completion} $\SIND{\CC}$, which adds all \emph{sifted
    colimits} to $\CC$ \cite{adamek_2001,adamek_2010b,adamek_2010}.
\end{enumerate}
All filtered colimits are sifted, so the latter involves adding strictly more
colimits to $\CC$. However, for a poset $W$ we have that $\IND{\Op{W}} \cong
\SIND{\Op{W}} \cong \Filt*{W}$ \cite[\S 2.3]{adamek_2001}. Thus, these two
completions are \emph{indistinguishable} at the order-theoretic level. As an
aside, note that the former completion is related to \emph{essentially algebraic
theories} \cite{adamek_1994}, while the latter to simpler \emph{algebraic
theories} of Lawvere \cite{adamek_2010}. 

We will work with the sifted completion, for several reasons. The most important
one is that, when $\CC$ has finite coproducts, $\SIND{\CC}$ is cocomplete. This
is just enough to allow us to embed any bicartesian closed category $\CC$ (which
has coproducts) into a cocomplete category $\SIND{\CC}$, adapting the story of
\S\ref{section:stable}. The cocompleteness is absolutely essential in the
semantics of modalities given in \S\ref{section:2dss-modal}. Second, the
conditions required on $\CC$ for $\SIND{\CC}$ to be a cartesian closed
category---and hence a model of intuitionistic proofs---are rather weak. Third,
there is an analogy to working with filters as elements of
$\FUNC{W}{\TV}[\land]$: the classic Lawverean move of replacing $\TV$ by $\SET$
\cite{lawvere_1973} leads us to consider product-preserving presheaves
$\FUNC{\CC}{\SET}[\times]$, which coincide with $\SIND{\Op{\CC}}$ whenever $\CC$
has products, mirroring Proposition \ref{proposition:filters-meet-semilattice}.


The following proposition collects various well-known facts about the sifted
completion \cite{adamek_2001,adamek_2010}. These are analogous to well-known
facts about presheaf categories \cite{kavvos_2024a}, and mirror those of
$\Filt*{W}$ given in \S\ref{section:stable}.
\begin{proposition}
  \label{proposition:sifted}
  Let $\SIND{\CC}$ be the sifted completion of $\CC$.
  \begin{enumerate}
    \item If $\CC$ has coproducts then $\SIND{\CC}$ is isomorphic to the
      category $\FUNC{\Op{\CC}}{\SET}[\times]$ of \emph{product-preserving
      presheaves} and natural transformations. It is a complete and cocomplete
      category.
  \end{enumerate}
  For the rest of this proposition we assume that $\CC$ has coproducts.
  \begin{enumerate} \setcounter{enumi}{1}
    \item Representable presheaves are product-preserving, so $\Yo : \CC \fto
      \FUNC{\Op{\CC}}{\SET}[\times]$ is an embedding.
    \item $\Yo : \CC \fto \FUNC{\Op{\CC}}{\SET}[\times]$ preserves products and
      \textbf{coproducts}.
    \item $\Yo{c}$ is \emph{perfectly presentable}, i.e. $\Hom{\Yo{c}}{-} :
      \FUNC{\Op{\CC}}{\SET}[\times] \to \SET$ preserves sifted colimits.
    \item A category is equivalent to $\FUNC{\Op{\CC}}{\SET}[\times]$ for some
      $\CC$ if and only if it is cocomplete and has a strong generator
      consisting of perfectly presentable objects. Moreover, there is a unique
      idempotent-complete category $\CC$ for which this is true (up to
      equivalence): the subcategory of perfectly presentable objects.
  \end{enumerate}
\end{proposition}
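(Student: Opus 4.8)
The plan is to prove Proposition~\ref{proposition:sifted} by appealing directly to the structure theory of sifted colimits, treating each clause as an instance of an already-established fact and then specialising to the case where $\CC$ has finite coproducts. For clause~(i), the reference point is the general theorem that $\SIND{\CC}$ is the free completion of $\CC$ under sifted colimits, together with the characterisation of such completions via flat-like functors. When $\CC$ has finite coproducts, a presheaf $P : \Op{\CC} \to \SET$ arises as a sifted colimit of representables exactly when it preserves finite products; the diagram of elements of such a $P$ is sifted precisely because $\Op{\CC}$ has finite products and $P$ sends them to products (this is the presheaf analogue of $\IND{\Op W} \cong \Filt* W$ for meet-semilattices). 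So I would cite \cite{adamek_2001,adamek_2010} for the equivalence $\SIND{\CC} \simeq \FUNC{\Op{\CC}}{\SET}[\times]$, and then note completeness (limits of product-preserving presheaves are computed pointwise and again preserve products) and cocompleteness (a free sifted-colimit completion of a category with finite coproducts is cocomplete, since every colimit decomposes into finite coproducts followed by a sifted colimit).

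For clauses~(ii)--(iv) I would argue as follows. Clause~(ii): a representable $\Yo c = \Hom{-}{c}$ turns a finite coproduct $a + b$ in $\CC$ into the product $\Hom{a}{c} \times \Hom{b}{c}$, hence is product-preserving; fullness, faithfulness and injectivity on objects are just the ordinary Yoneda lemma restricted along the inclusion $\FUNC{\Op{\CC}}{\SET}[\times] \hookrightarrow \FUNC{\Op{\CC}}{\SET}$, which is full and faithful. Clause~(iii): preservation of products by $\Yo$ is standard; preservation of finite coproducts is the categorified shadow of Lemma~\ref{lemma:upper-preserves}'s \enquote{finite joins} claim, and I would verify it by checking that $\Hom{-}{a+b} \cong \Hom{-}{a} \times \Hom{-}{b}$ \emph{as product-preserving presheaves}---the coproduct in $\FUNC{\Op{\CC}}{\SET}[\times]$ is not the pointwise one, but rather the reflection of the pointwise coproduct into product-preserving presheaves, and one computes directly that applied to representables it lands on $\Yo(a+b)$ (here is where distributivity-type bookkeeping enters, mirroring the dual distributive law used in \S\ref{section:stable}). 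Clause~(iv): $\Yo c$ is perfectly presentable because $\Hom{\Yo c}{-} \cong \operatorname{ev}_c$ (evaluation at $c$) by the Yoneda lemma, and evaluation preserves all colimits that are computed pointwise; sifted colimits in $\FUNC{\Op{\CC}}{\SET}[\times]$ \emph{are} computed pointwise (a sifted colimit of product-preserving presheaves is again product-preserving, since sifted colimits commute with finite products in $\SET$), so $\operatorname{ev}_c$ preserves them.

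Clause~(v) is the one I expect to be the main obstacle, as it is a genuine recognition theorem rather than a routine verification. The strategy is: given a cocomplete category $\mathcal{E}$ with a strong generator $\mathcal{G}$ of perfectly presentable objects, let $\CC$ be the full subcategory of $\mathcal{E}$ on (a skeleton of) the idempotent completion of $\mathcal{G}$; show $\CC$ has finite coproducts computed as in $\mathcal{E}$; then the restricted Yoneda / nerve functor $\mathcal{E} \to \FUNC{\Op{\CC}}{\SET}$ factors through $\FUNC{\Op{\CC}}{\SET}[\times]$ because the objects of $\CC$ are perfectly presentable (hence in particular preserve finite coproducts contravariantly), and one checks it is an equivalence by a density/adjoint-functor argument: it has a left adjoint given by left Kan extension along $\Yo$, this left adjoint is fully faithful because $\mathcal{G}$ is a generator of perfectly presentable objects, and it is essentially surjective because every object of $\mathcal{E}$ is a sifted colimit of objects of $\CC$ (this uses that $\mathcal{G}$ is a \emph{strong} generator together with cocompleteness). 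The uniqueness-of-$\CC$ clause then follows because the perfectly presentable objects of $\FUNC{\Op{\CC}}{\SET}[\times]$ are exactly the retracts of representables \cite{adamek_2001}, so $\CC$ is recovered up to idempotent completion. I would not reprove any of this from scratch; rather I would state it as a citation of \cite{adamek_2001,adamek_2010} and, if needed, indicate the above as the shape of the argument, since the whole point of Proposition~\ref{proposition:sifted} is to package known facts in a form parallel to the order-theoretic development of \S\ref{section:stable}.
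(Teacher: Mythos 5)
Your proposal takes essentially the same route as the paper: the paper's entire proof of this proposition consists of pointers into Ad\'amek--Rosick\'y(--Vitale) for each clause, plus the one-line observation that (iv) follows from representables being tiny and $\FUNC{\Op{\CC}}{\SET}[\times]$ being closed under sifted colimits in all presheaves. Your elaborations of (i), (ii), (iv) and (v) are faithful reconstructions of the arguments behind those citations (sifted category of elements of a product-preserving presheaf, pointwise sifted colimits because finite products commute with sifted colimits in $\SET$, and the recognition theorem for algebraic categories), and your closing remark that the proposition is meant to be a package of known facts is exactly the paper's stance.

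One formula in your clause (iii) is wrong as written: you propose to verify coproduct preservation by checking $\Hom{-}{a+b} \cong \Hom{-}{a} \times \Hom{-}{b}$, but that isomorphism fails in general---it is the defining property of a \emph{product} $a \times b$, i.e. a garbled restatement of clause (ii). What is actually needed is that $\Yo{(a+b)}$ has the universal property of the coproduct $\Yo{a} + \Yo{b}$ \emph{inside} $\FUNC{\Op{\CC}}{\SET}[\times]$, which is a one-line consequence of Yoneda and the fact that the test objects preserve products: for $P$ product-preserving, $\Hom{\Yo{(a+b)}}{P} \cong P(a+b) \cong P(a) \times P(b) \cong \Hom{\Yo{a}}{P} \times \Hom{\Yo{b}}{P}$. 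Your surrounding prose (reflect the pointwise coproduct and check it lands on $\Yo{(a+b)}$) is a correct but roundabout version of this, and no distributivity bookkeeping enters here---that is only needed for exponentials in Proposition \ref{proposition:sifted-ccc}. A smaller nitpick in (v): the nerve of $e$ landing in product-preserving presheaves is automatic from coproducts of $\CC$ being computed in $\mathcal{E}$, not a consequence of perfect presentability, which is instead what makes the nerve preserve sifted colimits.
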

\begin{proof}
  (i) is shown in \cite[\S 2.8]{adamek_2001} \cite[\S 1.22, 4.5,
  4.13]{adamek_2010}.
  (ii) is shown in \cite[\S 1.12]{adamek_2010} and (iii) in \cite[\S
  1.13]{adamek_2010}. (iv) follows from the fact representables are \emph{tiny},
  and that $\FUNC{\Op{\CC}}{\SET}[\times]$ is closed under sifted colimits in
  presheaves \cite[\S 5.5]{adamek_2010}. (v) is shown in \cite{adamek_2001}
  \cite[\S 6.9, 8.12]{adamek_2010}. 
\end{proof}

Categories satisfying the fifth condition given above are called \emph{algebraic
categories} by Lawvere \cite{lawvere_1963}. They are essentially categories of
models of algebraic theories:  see the textbook by Adamek et al.
\cite{adamek_2010}.

One might wonder whether $\Yo : \CC \fto \FUNC{\Op{\CC}}{\SET}[\times]$
preserves exponentials. It would, were $\FUNC{\Op{\CC}}{\SET}[\times]$ to have
them; and it has them exactly when $\CC$ is a \emph{distributive category}, i.e.
whenever the canonical morphism $(a \times b) + (a \times c) \to a \times (b +
c)$ is an isomorphism \cite{cockett_1993}:
\begin{proposition}[Fiore]
  \label{proposition:sifted-ccc}
  Let $\CC$ have (co)products. Then, the following are equivalent:
  \begin{enumerate}
    \item $\CC$ is distributive.
    \item $P \times \Yo{a + b} \cong P \times \Yo{a} + P \times \Yo{b}$ in
      $\FUNC{\Op{\CC}}{\SET}[\times]$
    \item $\SIND{\CC} \cong \FUNC{\Op{\CC}}{\SET}[\times]$ is cartesian closed.
  \end{enumerate}
  In that case $\Yo : \CC \fto \FUNC{\Op{\CC}}{\SET}[\times]$ preserves
  exponentials.
\end{proposition}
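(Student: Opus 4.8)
The plan is to prove the cycle $(1) \Rightarrow (2) \Rightarrow (3) \Rightarrow (1)$, and then deduce the final claim about $\Yo$ preserving exponentials as an immediate corollary of $(3)$ together with Proposition \ref{proposition:sifted}(iii)—once $\FUNC{\Op{\CC}}{\SET}[\times]$ is known to be cartesian closed, the fact that $\Yo$ preserves products already gives $\HeyExp{\Yo b}{\Yo c}$ as a candidate exponential, and one checks the universal property using that representables are dense and that $\Yo$ preserves colimits of the relevant shape; alternatively cite that any reflective-exponential-ideal inclusion preserves exponentials, since $\FUNC{\Op{\CC}}{\SET}[\times] \hookrightarrow \FUNC{\Op{\CC}}{\SET}$ is such and ordinary Yoneda preserves exponentials.

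For $(1) \Rightarrow (2)$: I would write any $P \in \FUNC{\Op{\CC}}{\SET}[\times]$ as a sifted colimit of representables, $P \cong \colim_i \Yo{c_i}$, using Proposition \ref{proposition:sifted}. Both $P \times (-)$ and the functors $P \times \Yo{a} + P \times \Yo{b}$ are built from colimit-preserving operations (products in $\FUNC{\Op{\CC}}{\SET}[\times]$ preserve sifted colimits in each variable, since the inclusion into presheaves creates them and finite products commute with sifted—indeed filtered and reflexive-coequalizer—colimits; coproducts preserve all colimits), so it suffices to establish the isomorphism when $P = \Yo{d}$ is representable. Then $\Yo{d} \times \Yo{a+b} \cong \Yo{d \times (a+b)}$ by Proposition \ref{proposition:sifted}(iii), $d \times (a+b) \cong (d\times a) + (d \times b)$ by distributivity of $\CC$, and $\Yo$ preserves this coproduct by Proposition \ref{proposition:sifted}(iii) again, landing on $\Yo{d}\times\Yo a + \Yo d \times \Yo b$. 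Naturality in $P$ of the resulting iso is routine.

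For $(2) \Rightarrow (3)$: $\SIND{\CC} \cong \FUNC{\Op{\CC}}{\SET}[\times]$ is cocomplete (Proposition \ref{proposition:sifted}(i)) and generated under sifted colimits by the representables, which are perfectly presentable. To get cartesian closure it suffices to show $P \times (-)$ preserves all colimits for each $P$; by the special adjoint functor theorem (the category is cocomplete, locally presentable, hence has the requisite generator and well-poweredness) this yields a right adjoint $\HeyExp{P}{-}$. Since $P \times (-)$ already preserves sifted colimits, it remains to check it preserves finite coproducts, i.e. $P \times (Q_1 + Q_2) \cong P\times Q_1 + P \times Q_2$ and $P \times 0 \cong 0$. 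Writing $Q_j$ as sifted colimits of representables and using that both sides preserve sifted colimits in the $Q_j$, this reduces to the case $Q_j = \Yo{a_j}$, which is exactly hypothesis $(2)$ (the nullary case $P \times 0 \cong 0$ follows similarly, reducing to $\Yo d \times \Yo{0_\CC} \cong \Yo{d \times 0_\CC} \cong \Yo{0_\CC} \cong 0$ using that $\Yo$ preserves the initial object as the empty coproduct).

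For $(3) \Rightarrow (1)$: if $\FUNC{\Op{\CC}}{\SET}[\times]$ is cartesian closed then $P \times (-)$ preserves coproducts for every $P$; taking $P = \Yo d$ and the coproduct $\Yo a + \Yo b \cong \Yo{a+b}$ gives $\Yo{d \times (a+b)} \cong \Yo{d} \times \Yo{a+b} \cong \Yo{d}\times\Yo a + \Yo d \times \Yo b \cong \Yo{(d\times a)+(d\times b)}$, and since $\Yo$ is fully faithful (an embedding, Proposition \ref{proposition:sifted}(ii)) this isomorphism is witnessed by a map in $\CC$, which one checks is the canonical comparison; hence $\CC$ is distributive. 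The main obstacle I anticipate is the bookkeeping in the reductions "$\Rightarrow$ representable case", specifically verifying that finite products in $\FUNC{\Op{\CC}}{\SET}[\times]$ commute with sifted colimits in each variable separately—this is the crux that makes the density arguments go through, and it rests on the fact that sifted colimits are precisely those commuting with finite products in $\SET$, transported along the (sifted-colimit-creating) inclusion $\FUNC{\Op{\CC}}{\SET}[\times] \hookrightarrow \FUNC{\Op{\CC}}{\SET}$.
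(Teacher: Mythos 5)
Your proposal is correct in substance and follows the paper's architecture for $(1) \Rightarrow (2)$ (decompose $P$ as a sifted colimit of representables, observe that both sides commute with sifted colimits in $P$, and reduce to distributivity of $\CC$ transported along $\Yo$) and for $(3) \Rightarrow (1)$. Where you genuinely diverge is $(2) \Rightarrow (3)$: the paper simply checks that the usual presheaf exponential $(\HeyExp{P}{Q})(c) = \Hom{P \times \Yo{c}}{Q}$ is itself a product-preserving presheaf, using (2), so that the exponential of $\FUNC{\Op{\CC}}{\SET}$ restricts to the full subcategory; you instead prove that $P \times (-)$ is cocontinuous (sifted colimits together with finite coproducts generate all colimits) and invoke the adjoint functor theorem for locally presentable categories. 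Both routes work: the paper's is more elementary and exhibits the exponential formula explicitly, while yours buys abstract existence at the cost of the (true but nontrivial) generation-of-colimits fact and of local presentability, which presupposes $\CC$ small. One wrinkle: in the nullary case of your $(2) \Rightarrow (3)$ you reduce $P \times 0 \cong 0$ to $d \times \initial \cong \initial$ in $\CC$, which is the nullary part of hypothesis (1) and is not available at that point in the cycle. This is fixable---derive $P \times 0 \cong 0$ from the binary isomorphism (2) alone via the standard Cockett/Carboni--Lack--Walters argument that binary distributivity implies the nullary law, run inside $\FUNC{\Op{\CC}}{\SET}[\times]$---but as written that step is circular. (The paper's own handling of the nullary case is comparably terse, so this is a shared debt.) Your derivation of the final claim, that $\Yo$ preserves whatever exponentials $\CC$ has once $\FUNC{\Op{\CC}}{\SET}[\times]$ is cartesian closed, is the standard density argument and is fine.
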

\begin{proof}
  (i) $\Rightarrow$ (ii): Write $P \cong \Colim[(c, x) \in \PSHEL{P}] \Yo{c}$ as
  a colimit of representables. As $P$ is product-preserving, its category of
  elements $\PSHEL{P}$ is sifted \cite[\S 4.2]{adamek_2010}. Hence, it does not
  matter if this colimit is in presheaves or product-preserving presheaves, as
  the latter are closed under sifted colimits within the former \cite[\S
  2.5]{adamek_2010}. Noticing also that $\times$ is the same operation in both
  $\FUNC{\Op{\CC}}{\SET}$ and $\FUNC{\Op{\CC}}{\SET}[\times]$ we may calculate
  \begin{align*}
    P \times \Yo{a + b}
    &\cong \prn{\Colim[(c, x) \in \PSHEL{P}] \Yo{c}} \times \Yo{a + b} && \text{now in presheaves}\\
    &\cong \Colim[(c, x) \in \PSHEL{P}] \prn{\Yo{c} \times \Yo{a + b}} && \text{as $- \times \Yo{a + b}$ is left adjoint} \\
    &\cong \Colim[(c, x) \in \PSHEL{P}] \Yo{c \times (a + b)} && \text{now back in $\FUNC{\Op{\CC}}{\SET}[\times]$} \\
    &\cong \Colim[(c, x) \in \PSHEL{P}] \Yo{(c \times a) + (c \times b)} && \\
    &\cong \Colim[(c, x) \in \PSHEL{P}] (\Yo{c} \times \Yo{a}) + (\Yo{c} \times \Yo{b})  && \text{where $+$ is now in $\FUNC{\Op{\CC}}{\SET}[\times]$} \\
    &\cong 
      \prn{\Colim[(c, x) \in \PSHEL{P}] \Yo{c} \times \Yo{a}}
      +
      \prn{\Colim[(c, x) \in \PSHEL{P}] \Yo{c} \times \Yo{b}} && \text{as colimits commute with colimits} \\
    &\cong 
      \prn{\Colim[(c, x) \in \PSHEL{P}] \Yo{c}} \times \Yo{a}
        +
      \prn{\Colim[(c, x) \in \PSHEL{P}] \Yo{c}} \times \Yo{b} \\
    &\cong P \times \Yo{a} + P \times \Yo{b}
  \end{align*}

  (ii) $\Rightarrow$ (iii): We only need to prove that the usual exponential
  $
  (\HeyExp{P}{Q})(c) \defeq \Hom[\FUNC{\Op{\CC}}{\SET}[\times]]{P \times \Yo{c}}{Q}
  $
  is a product-preserving presheaf. But this easily follows from the observation
  that $\Yo{\initial} \cong \terminal$ and (ii).

  (iii) $\Rightarrow$ (i): Then $\FUNC{\Op{\CC}}{\SET}[\times]$ is a bicartesian
  closed category, and hence it is distributive. But $\Yo$ is an embedding that
  preserves products and coproducts, so the subcategory $\CC$ is distributive as
  well.
\end{proof}

This result is a special case of a more general theorem due to Marcelo Fiore
\cite[Prop 11.10]{fiore_1996} \cite[Prop 2.4]{fiore_1997} \cite[\S
3.1]{fiore_2002} \cite[\S 6.2]{curien_2016}. The proof we give here simplifies
one given by Younesse Kaddar \cite{kaddar_2020}.

This puts us in a good place to introduce a \emph{two-dimensional stable
semantics}. This amounts to replacing the stable frame $(W, \sqsubseteq)$ by a
category $\CC$ with products and coproducts for which $\Op{\CC}$ is
distributive. This means that the somewhat unusual isomorphism $a + (b \times c)
\cong (a + b) \times (a + c)$ holds in $\CC$. This is known to hold in a number
of categories of algebras, including distributive lattices and commutative rings
\cite{davey_1981}. I have yet to grasp the meaning of this for a general Lawvere
theory. Perhaps previous work by Johnstone \cite{johnstone_1990} and Garner
\cite{garner_2023} describing when categories of varieties are toposes or
cartesian closed might give some insight. Finally, note that---unlike
distributive lattices---distributive categories are \emph{not self-dual}, so
this is all we get.

By Propositions \ref{proposition:sifted} and \ref{proposition:sifted-ccc}, the
category $\FUNC{\CC}{\SET}[\times]$ is a bicartesian closed category. The
two-dimensional stable semantics is then dictated by the bicartesian closed
structure. Thus, every formula $\varphi$ is interpreted as a product-preserving
presheaf
\[
  \sem{\varphi} : \CC \to_{\times} \SET
\]
Writing $\sem{\varphi}_w$ to mean $\sem{\varphi}(w)$ for any $w \in \CC$ and $f
\cdot x \in \sem{\varphi}_{v}$ for $\sem{\varphi}(f)$ and $f : w \to v$, the
clauses are the expected proof-relevant categorifications of the stable
semantics of \S\ref{section:stable}. Some are as before:
\begin{align*}
    \sem{\varphi \land \psi}_w &\defeq \sem{\varphi}_w \times \sem{\psi}_w \\
    \sem{\varphi \to \psi}_w &\defeq 
    \SetComp{ 
      F : \DepFun{v}{\CC}{\Hom[\CC]{w}{v} \to \sem{\varphi}_v \to \sem{\psi}_v}
    }{ \forall g.\,\, 
      g \cdot F(v)(f)(x) = F(v')(g \circ f)(g \cdot x)  }
\end{align*}

However, the interpretation of $\varphi \lor \psi$ is not immediately evident,
as it is a coproduct in $\FUNC{\CC}{\SET}[\times]$. Adamek et al. \cite[\S
4.5]{adamek_2010} prove the existence of such coproducts abstractly, by
decomposing presheaves as sifted colimits of representables and using the fact
$\Yo{a} + \Yo{b} = \Yo{a \times b}$. To enable a direct comparison with the
stable semantics of disjunction of \S\ref{section:stable}, we need to describe
them in a more direct way.

\begin{theorem}
  Let $\CC$ have finite products. Then the coproduct in
  $\FUNC{\CC}{\SET}[\times]$ is given by the coend
  \[
    (P + Q)(c) \defeq \int^{c_1, c_2 \in \CC} \Hom[\CC]{c_1 \times c_2}{c}
    \times P(c_1) \times Q(c_2)
  \]
\end{theorem}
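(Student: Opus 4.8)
The plan is to exhibit the coend formula as a colimit of representables that matches Adámek et al.'s abstract construction of coproducts in $\FUNC{\CC}{\SET}[\times]$, the key identity being $\Yo{a} + \Yo{b} \cong \Yo{a \times b}$. First I would write $P \cong \Colim[(c_1, x) \in \PSHEL{P}] \Yo{c_1}$ and $Q \cong \Colim[(c_2, y) \in \PSHEL{Q}] \Yo{c_2}$; since $P, Q$ are product-preserving, their categories of elements are sifted, so these colimits may be taken in $\FUNC{\Op{\CC}}{\SET}[\times]$ rather than merely in presheaves (using that product-preserving presheaves are closed under sifted colimits inside all presheaves). Coproducts commute with colimits in each variable, so $P + Q \cong \Colim[(c_1,x)]\Colim[(c_2,y)] (\Yo{c_1} + \Yo{c_2}) \cong \Colim[(c_1,x)]\Colim[(c_2,y)] \Yo{c_1 \times c_2}$.

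Next I would recognise the right-hand side of the theorem as exactly this iterated colimit. The coend
\[
  \int^{c_1, c_2} \Hom[\CC]{c_1 \times c_2}{c} \times P(c_1) \times Q(c_2)
\]
is, by the standard colimit-of-elements presentation, the colimit of the functor $(c_1, x, c_2, y) \mapsto \Hom[\CC]{c_1 \times c_2}{c} = \Yo{c_1 \times c_2}(c)$ over $\PSHEL{P} \times \PSHEL{Q}$ (the factors $P(c_1)$ and $Q(c_2)$ in the coend are precisely what records the objects of $\PSHEL{P} \times \PSHEL{Q}$, via the co-Yoneda lemma $P(c_1) \cong \int^{d} \Hom{d}{c_1} \times P(d)$ applied levelwise, or directly by the formula for a colimit over a category of elements). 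Thus the coend, computed pointwise in $c$, is naturally isomorphic to $\big(\Colim[(c_1,x,c_2,y) \in \PSHEL{P}\times\PSHEL{Q}] \Yo{c_1 \times c_2}\big)(c)$, and I need this colimit to be computed in $\FUNC{\CC}{\SET}[\times]$ — which again follows because $\PSHEL{P} \times \PSHEL{Q}$ is sifted (a product of sifted categories is sifted) and the product-preserving presheaves are closed under such colimits, so pointwise evaluation computes the colimit correctly.

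Assembling these, $(P+Q)(c) \cong \int^{c_1,c_2}\Hom{c_1\times c_2}{c}\times P(c_1)\times Q(c_2)$ as an object of $\FUNC{\CC}{\SET}[\times]$, and one checks the coprojections $P \to P+Q$, $Q \to P+Q$ correspond under this isomorphism to the evident maps induced by $c_1 \cong c_1 \times \terminal \to c_1 \times c_2$ (using $\Yo{\initial} \cong \terminal$, hence $Q(\terminal) \ni *$ gives a canonical element) — or, more cleanly, verify the universal property of the coend directly against the defining property of a coproduct. I expect the main obstacle to be the bookkeeping around \emph{where} each colimit lives: the naive coend computed in $\SET$ pointwise must be shown to land in product-preserving presheaves and to satisfy the universal property \emph{there}, not just in all presheaves; this is exactly the point where siftedness of $\PSHEL{P}$, $\PSHEL{Q}$ and their product, together with the closure of $\FUNC{\Op{\CC}}{\SET}[\times]$ under sifted colimits, must be invoked carefully — otherwise one would only have computed the coproduct in the presheaf category, where it does not agree.
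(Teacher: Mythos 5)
Your proof is correct, but you should know that the paper does not actually prove this theorem: it only remarks that verifying the universal property of the coproduct is ``easy'' and explicitly concedes that a conceptual proof that the coend is \emph{product-preserving} eludes the author. The implied strategy there is thus two-step---check the universal property by a direct coend computation, then check product-preservation separately---and it is the second step that is left hanging. Your route is genuinely different and, I think, better: by writing $P \cong \Colim[(c_1, x) \in \PSHEL{P}] \Yo{c_1}$ and $Q \cong \Colim[(c_2, y) \in \PSHEL{Q}] \Yo{c_2}$ as sifted colimits of (co)representables, using $\Yo{c_1} + \Yo{c_2} \cong \Yo{c_1 \times c_2}$, and identifying the coend with the conical colimit of $\Yo{c_1 \times c_2}$ over $\PSHEL{P} \times \PSHEL{Q}$, you get the universal property and product-preservation \emph{simultaneously}: the coend is exhibited as a sifted colimit (a product of sifted categories is sifted) of product-preserving functors, and $\FUNC{\CC}{\SET}[\times]$ is closed under sifted colimits computed pointwise. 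This supplies exactly the conceptual argument the paper says it lacks, and it is the argument the paper gestures at when citing Adamek et al.\ without carrying it out. Three small points to tighten. First, ``coproducts commute with colimits in each variable'' is false for a general index category (take a discrete one); what you actually need is that $+$ preserves colimits of the joint variable (being left adjoint to the diagonal) together with the fact that $\Colim[j \in J] A \cong A$ for constant diagrams over a nonempty \emph{connected} $J$---sifted categories are nonempty and connected, so you are fine, but say so. Second, keep the variances straight: the theorem lives in $\FUNC{\CC}{\SET}[\times] \cong \SIND{\Op{\CC}}$, so the relevant representables are the covariant $\Hom[\CC]{d}{-}$ and the embedding sends products of $\CC$ to coproducts; at one point you slip into writing $\FUNC{\Op{\CC}}{\SET}[\times]$. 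Third, your sketch of the coprojection points the canonical map the wrong way: the element of $(P+Q)(c)$ induced by $x \in P(c)$ is the class of the tuple $(c, \terminal, \pi_1, x, \ast)$ with $\pi_1 : c \times \terminal \to c$ the projection and $\ast$ the unique element of $Q(\terminal) \cong \terminal$; alternatively, as you note, the coprojections come for free because every isomorphism in your chain is canonical.
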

An element of $(P + Q)(c)$ essentially consists of tuples $(c_1, c_2, f, x, y)$
where $f : c_1 \times c_2 \to c$ is a `decomposition' of $c$, $x \in P(c_1)$,
and $y \in Q(c_2)$. If we think of $\CC$ as an algebraic theory, $f$ can be
thought of as a term of sort $c$ in terms of two variables of sorts $c_1$ and
$c_2$; and the elements of $P(c_1)$ and $Q(c_2)$ can be considered as elements
of the algebra at sorts $c_1$ and $c_2$ respectively.

This is evidently a direct categorification of the stable semantics of
disjunction given in \S\ref{section:stable}. However, as this is now a coend,
these data have to be appropriately quotiented: for any $g : c'_1 \to c_1$, $h :
c'_2 \to c_2$, $t'_1 \in P(c'_1)$ and $t'_2 \in P(c'_2)$, the tuples $(c_1, c_2,
f, g \cdot t'_1, h \cdot t'_2)$ and $(c'_1, c'_2, f \circ (g \times h), t'_1,
t'_2)$ should be identified. This guarantees that the choice of decomposition is
`minimal.' It is easy to prove that this object has the right universal
property. However, a conceptual proof that it is product-preserving eludes me.

Finally, notice that this is essentially the `free' product of algebras, as
expected. It is also clearly a version of the \emph{Day convolution product} on
presheaves \cite[\S 6.2]{loregian_2021}. It is in fact a known result of higher
algebra that the Day convolution is the coproduct of commutative algebra objects
over a symmetric monoidal $\infty$-category: see Lurie's book \cite[Lemma
3.2.4.7]{lurie_2017}.

\subsection{Completeness}
  \label{section:2dss-int-completeness}

We are now able to show completeness results for the categorical semantics of
intuitionistic logic, i.e. bicartesian closed categories: if $\CC$ is a
bicartesian closed category then it is a fortiori distributive, and
$
  \Yo : \CC \fto \FUNC{\Op{\CC}}{\SET}[\times]
$
is an embedding that preserves the bicartesian closed structure of $\CC$. Lemma
\ref{lemma:ccc-conservativity-1} extends to disjunction and falsity on objects,
but also to \emph{proofs}. The latter can be represented as terms of the typed
$\lambda$-calculus with sums and an empty type up to $\beta\eta$ equality. We
refer to \cite{crole_1993,lambek_1988} for background on the categorical
semantics of the typed $\lambda$-calculus.
\begin{lemma}
  \label{lemma:ccc-conservativity-2}
  Let $\CC$ be a bicartesian closed category.
  \begin{enumerate}
    \item There is an isomorphism $\theta_A :
      \sem{A}_{\FUNC{\Op{\CC}}{\SET}[\times]} \cong \Yo{\sem{A}_{\CC}}$ for any
      type $A$ of the simply-typed $\lambda$-calculus.
    \item If $\Gamma \vdash M : A$ is a term of the typed $\lambda$-calculus,
      then the following diagram commutes:
      \[
        \begin{tikzpicture}[diagram]
          \SpliceDiagramSquare{
            height = 2cm,
            width = 4cm,
            nw = \sem{\Gamma}_{\FUNC{\Op{\CC}}{\SET}[\times]},
            sw = \Yo{\sem{\Gamma}_{\CC}},
            west = i \circ \prod_{(x : A) \in \Gamma} \theta_{A},
            ne = \sem{A}_{\FUNC{\Op{\CC}}{\SET}[\times]},
            se = \Yo{\sem{A}_{\CC}},
            east = \theta_A,
            south = \Yo{\sem{M}},
            north = \sem{M}
          }
        \end{tikzpicture}
      \]
      where $\sem{\Gamma} \defeq \prod_{(x : A) \in \Gamma} \sem{A}$, and $i :
      \prod_{(x : A) \in \Gamma} \Yo{\sem{A}_{\CC}} \xrightarrow{\cong}
      \Yo{\prod_{(x : A) \in \Gamma} \sem{A}_{\CC}}$ arises from the fact $\Yo$
      preserves finite products.
  \end{enumerate}
\end{lemma}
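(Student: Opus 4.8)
The plan is to prove both parts by simultaneous induction on the structure of the type $A$ (for (i)) and the typing derivation of $\Gamma \vdash M : A$ (for (ii)), exploiting the fact that $\Yo : \CC \to \FUNC{\Op{\CC}}{\SET}[\times]$ is a bicartesian closed functor by Propositions~\ref{proposition:sifted} and~\ref{proposition:sifted-ccc}. For part~(i), the base case of a propositional variable $p$ is definitional, since $\sem{p}_{\FUNC{\Op{\CC}}{\SET}[\times]} \defeq \Yo{\sem{p}_{\CC}}$. The inductive cases for $\times$, $\to$, $+$, $\terminal$ and $\initial$ then follow because $\Yo$ preserves each of the corresponding (co)cartesian-closed constructions: for instance $\sem{A \times B}_{\FUNC{\Op{\CC}}{\SET}[\times]} = \sem{A}_{\FUNC{\Op{\CC}}{\SET}[\times]} \times \sem{B}_{\FUNC{\Op{\CC}}{\SET}[\times]} \cong \Yo{\sem{A}_{\CC}} \times \Yo{\sem{B}_{\CC}} \cong \Yo{\sem{A}_{\CC} \times \sem{B}_{\CC}} = \Yo{\sem{A \times B}_{\CC}}$, using the induction hypothesis and then the coherence isomorphism witnessing preservation of products; the exponential and coproduct cases are analogous, the latter being exactly where distributivity of $\Op{\CC}$ (hence cartesian closure of $\FUNC{\Op{\CC}}{\SET}[\times]$) and Proposition~\ref{proposition:sifted-ccc} are needed. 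One must take a little care to fix the isomorphisms $\theta_A$ canonically, so that the coherence data threaded through in part~(ii) are unambiguous; this is the standard bookkeeping one does when showing a functor is a morphism of $\lambda$-theories.

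For part~(ii), I would proceed by induction on the derivation of $\Gamma \vdash M : A$, checking that the required square commutes for each typing rule. The key observation is that each syntactic construct is interpreted, in both $\CC$ and $\FUNC{\Op{\CC}}{\SET}[\times]$, by the same categorical operation (pairing, application, currying, copairing, the unique map into $\terminal$, the unique map out of $\initial$, case analysis), and that $\Yo$ commutes with all of these up to the structural isomorphisms recorded in part~(i). So for, say, a pairing $\Gamma \vdash \langle M, N\rangle : A \times B$, the square for $\langle M, N\rangle$ factors as a pasting of the squares for $M$ and $N$ together with the naturality square for $i$ and the coherence isomorphism $\Yo{\sem{A}} \times \Yo{\sem{B}} \cong \Yo{\sem{A} \times \sem{B}}$; chasing the diagram gives commutativity. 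Variables are handled by the (projection) structure of $\sem{\Gamma}$, and the substitution lemma — which one proves alongside, or quotes from the categorical semantics of the $\lambda$-calculus \cite{crole_1993,lambek_1988} — is what makes the $\beta$-rules and the $\to$-introduction and elimination cases go through.

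The main obstacle is the coproduct case: both the type-former $+$ in part~(i) and the elimination rule for sums (case analysis) in part~(ii). Here $\sem{A + B}_{\FUNC{\Op{\CC}}{\SET}[\times]}$ is the Day-convolution coend of the preceding theorem, and one must verify that the comparison map $\sem{A}_{\FUNC{\Op{\CC}}{\SET}[\times]} + \sem{B}_{\FUNC{\Op{\CC}}{\SET}[\times]} \cong \Yo{\sem{A}_{\CC}} + \Yo{\sem{B}_{\CC}} \cong \Yo{\sem{A}_{\CC} + \sem{B}_{\CC}}$ is genuinely an isomorphism and behaves naturally with respect to the copairings that interpret $\mathsf{case}$. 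Preservation of coproducts by $\Yo$ gives the first isomorphism and the identity $\Yo{a} + \Yo{b} \cong \Yo{a \times b}$ (note $+$ in $\CC$ corresponds to $\times$ in $\Op{\CC}$) the naturality, but one must still check that the induced map on case-analysis terms is the one prescribed by the universal property of the coend — equivalently, that $\Yo$, being a left adjoint composite (it is the restriction of a colimit-dense embedding), commutes with the relevant colimit. Everything else is the routine diagram-chasing that accompanies any proof that a structure-preserving functor is a morphism of models.
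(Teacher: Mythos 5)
Your proof is correct and is exactly the argument the paper intends: the lemma is stated there without proof, relying on the preceding facts that $\Yo$ preserves products, coproducts and exponentials (Propositions~\ref{proposition:sifted} and~\ref{proposition:sifted-ccc}) together with the standard induction on types and typing derivations from the categorical semantics of the typed $\lambda$-calculus \cite{crole_1993,lambek_1988}, which is precisely what you spell out. One small misattribution: distributivity of $\CC$ (not of $\Op{\CC}$) and Proposition~\ref{proposition:sifted-ccc} are what make the \emph{exponential} case work, since they are needed for $\FUNC{\Op{\CC}}{\SET}[\times]$ to be cartesian closed and for $\Yo$ to preserve exponentials, whereas the coproduct case needs only Proposition~\ref{proposition:sifted}(iii), which holds without distributivity.
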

Then, assuming that bicartesian closed categories are complete for the typed
$\lambda$-calculus:
\begin{theorem}
  The subclass of models consisting of $\FUNC{\Op{\CC}}{\SET}[\times]$ over a
  distributive $\CC$ is complete for equational theory of the typed
  $\lambda$-calculus with sums and an empty type.
\end{theorem}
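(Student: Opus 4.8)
The plan is to reduce completeness for the $\lambda$-calculus with sums and an empty type to the (assumed) completeness of arbitrary bicartesian closed categories, using the embedding $\Yo : \CC \fto \FUNC{\Op{\CC}}{\SET}[\times]$ and Lemma~\ref{lemma:ccc-conservativity-2}. Concretely, suppose $\Gamma \vdash M = N : A$ holds in every model of the form $\FUNC{\Op{\CC}}{\SET}[\times]$ for $\CC$ distributive. We must show that $M = N$ is derivable in the equational theory; by the assumed completeness of bicartesian closed categories, it suffices to show $\sem{M}_{\CC} = \sem{N}_{\CC}$ for every bicartesian closed category $\CC$ and every interpretation of the base types as objects of $\CC$.

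So fix such a $\CC$. First I would invoke the observation recalled at the start of \S\ref{section:2dss-int-completeness}: a bicartesian closed category is distributive (the canonical map $(a\times b)+(a\times c)\to a\times(b+c)$ is invertible because $a\times-$ is a left adjoint), so $\Op{\CC}$ is a legitimate \emph{stable category}, $\FUNC{\Op{\CC}}{\SET}[\times]$ is bicartesian closed by Propositions~\ref{proposition:sifted} and~\ref{proposition:sifted-ccc}, and $\Yo$ preserves the entire bicartesian closed structure (products and exponentials by Proposition~\ref{proposition:sifted-ccc}, coproducts and the initial object by Proposition~\ref{proposition:sifted}(iii)). Now interpret the base types of the $\lambda$-calculus in $\FUNC{\Op{\CC}}{\SET}[\times]$ via the representables $\Yo{\sem{p}_{\CC}}$, so that Lemma~\ref{lemma:ccc-conservativity-2} applies: for every type $A$ there is an isomorphism $\theta_A : \sem{A}_{\FUNC{\Op{\CC}}{\SET}[\times]} \cong \Yo{\sem{A}_{\CC}}$, and for every term the naturality square of Lemma~\ref{lemma:ccc-conservativity-2}(ii) commutes.

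By hypothesis $\sem{M}_{\FUNC{\Op{\CC}}{\SET}[\times]} = \sem{N}_{\FUNC{\Op{\CC}}{\SET}[\times]}$, since $\FUNC{\Op{\CC}}{\SET}[\times]$ is one of the models in the subclass. Pasting the two commuting squares from Lemma~\ref{lemma:ccc-conservativity-2}(ii) for $M$ and for $N$ against this equality, and using that $\theta_A$ and $i\circ\prod_{(x:A)\in\Gamma}\theta_A$ are isomorphisms, we cancel them to obtain $\Yo{\sem{M}_{\CC}} = \Yo{\sem{N}_{\CC}}$ as morphisms $\Yo{\sem{\Gamma}_{\CC}}\to\Yo{\sem{A}_{\CC}}$. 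Since $\Yo$ is faithful (indeed full and faithful), this forces $\sem{M}_{\CC} = \sem{N}_{\CC}$ in $\CC$. As $\CC$ and the interpretation were arbitrary, completeness of bicartesian closed categories gives $\Gamma\vdash M = N : A$, as required. The same argument run on closed terms of a fixed type, or equivalently on objects alone using part~(i) of the Lemma, handles the purely type-theoretic/propositional statement.

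The main obstacle is not in the reduction above—which is a routine ``embed, transport along an iso, use fullness/faithfulness'' argument—but is entirely loaded into Lemma~\ref{lemma:ccc-conservativity-2}, and ultimately into the fact, asserted just before the theorem, that $\Yo:\CC\fto\FUNC{\Op{\CC}}{\SET}[\times]$ \emph{preserves coproducts and the initial object}. The delicate point is that the coproduct in $\FUNC{\CC}{\SET}[\times]$ is the Day-convolution/coend formula of the preceding theorem, not the pointwise coproduct of presheaves, so one must check $\Yo{a}+\Yo{b}\cong\Yo{a\times b}$ against \emph{that} description and then verify, by induction on the derivation of $\Gamma\vdash M:A$, that $\theta$ is natural through the sum-introduction and sum-elimination (case) rules as well as through $\bot$-elimination—this is where the universal property of the coend, and the commutation of $-\times\Yo{c}$ with colimits used to establish cartesian closure, are doing the real work. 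Given Lemma~\ref{lemma:ccc-conservativity-2}, the theorem is immediate.
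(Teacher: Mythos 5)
Your proof is correct and follows essentially the same route as the paper: pick an arbitrary bicartesian closed $\CC$, observe it is distributive so that $\FUNC{\Op{\CC}}{\SET}[\times]$ lies in the stated subclass, transport the assumed equation along the isomorphisms of Lemma~\ref{lemma:ccc-conservativity-2}, and conclude by faithfulness of $\Yo$ together with the assumed completeness of bicartesian closed categories. The paper's proof is simply a terser version of the same argument, with the real work likewise deferred to Lemma~\ref{lemma:ccc-conservativity-2} as you note.
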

\begin{proof}
  Let $\Gamma \vdash M, N : A$ be two terms with $\sem{M} = \sem{N}$ when
  interpreted in any product-preserving presheaf category
  $\FUNC{\Op{\CC}}{\SET}[\times]$ with $\CC$ distributive. Pick any bicartesian
  closed $\CC$. By Lemma \ref{lemma:ccc-conservativity-2} we have $\Yo{\sem{M}}
  = \Yo{\sem{N}}$, where the interpretation is now in $\CC$. But $\Yo$ is
  faithful, so $\sem{M} = \sem{N}$ in every bicartesian closed category $\CC$.
  Then $\Gamma \vdash M = N : A$ by the completeness of bicartesian closed
  categories.
\end{proof}
There is of course a converse, which shows that completeness of this class of
models implies completeness of the class of bicartesian closed categories. It is
similar in spirit to Theorem \ref{theorem:stable-to-alg}.

\subsection{Morphisms}
  \label{section:2dss-int-morphisms}

Unlike most the previous dualities we have presented, the one in this section
has been carefully studied \cite{adamek_2003,adamek_2010}. However, the
terminology is different: instead of \emph{stable categories} they speak of
\emph{algebraic categories}; and instead of \emph{stable functors}, i.e.
functors preserving finite products, they speak of \emph{morphisms of algebraic
theories}. In fact, the duality required here is a \emph{syntax-semantics
duality} for Lawvere's algebraic theories.

To sketch this duality we must first look at the extension properties of
$\SIND{\CC}$. Given any $F : \CC \fto \EE$, where $\EE$ is a category with
sifted colimits, there is a unique $\LKan{F} : \FUNC{\Op{\CC}}{\SET}[\times]
\fto \EE$ that extends $F$ and preserves sifted colimits, as in the following
commuting diagram:
\begin{equation}
  \label{diagram:sift-extension-1}
  \begin{tikzpicture}[node distance=2.5cm, on grid, baseline=(current  bounding  box.center)]
    \node (C) {$\CC$};
    \node (PSHC) [right = 4cm of C] {$\FUNC{\Op{\CC}}{\SET}[\times]$};
    \node (E) [below = 2cm of PSHC] {$\EE$};
    \node (expl) [right = 5em of E] {has sifted colimits};
    \path[->] (C) edge node [above] {$\Yo$} (PSHC);
    \path[->] (C) edge node [below = .5em] {$F$} (E);
    \path[->, dashed] (PSHC) 
      edge 
      node [right] {$\LKan{F}$} 
      (E);
  \end{tikzpicture}
\end{equation}
This property is exactly what it means for $\FUNC{\Op{\CC}}{\SET}[\times]$ to be
the sifted colimit completion \cite[\S 4.9]{adamek_2010}. However, we can also
get a slightly more refined extension property. Suppose that $F : \CC \fto \EE$
also \textbf{preserves coproducts}, and that $\EE$ is cocomplete. Then
$\LKan{F}$ preserves \textbf{all colimits} and has a right adjoint:
\begin{equation}
  \label{diagram:sift-extension-2}
  \begin{tikzpicture}[node distance=2.5cm, on grid, baseline=(current  bounding  box.center)]
    \node (C) {$\CC$};
    \node (PSHC) [right = 4cm of C] {$\FUNC{\Op{\CC}}{\SET}[\times]$};
    \node (E) [below = 2cm of PSHC] {$\EE$};
    \node (expl) [right = 6em of E] {is cocomplete};
    \path[->] (C) edge node [above] {$\Yo$} (PSHC);
    \path[->] (C) edge node [below = .5em] {$F$} (E);
    \path[->, dashed] (PSHC) 
      edge 
        node [left] {$\LKan{F}$} 
        node [right = .15em] {$\Adjoint$}
      (E);
      \path[->, bend right = 45, dotted] (E) edge node [right] {$\NerveS{F}$} (PSHC);
  \end{tikzpicture}
\end{equation}
The reason is that the usual functor $\NerveS{F}(e) \defeq \Hom[\EE]{F-}{e}$ is
then valued in $\FUNC{\Op{\CC}}{\SET}[\times]$, and can readily be shown to be
right adjoint to $\LKan{F}$ \cite[\S 4.15]{adamek_2010}.

Then, given any stable functor $f : \CC \fto \DD$ take the extension of $\Yo
\circ \Op{f}$ as in \eqref{diagram:sift-extension-2}
\begin{equation}
  \label{diagram:sift-extension-stable}
  \begin{tikzpicture}[node distance=2.5cm, on grid, baseline=(current  bounding  box.center)]
    \node (C) {$\Op{\CC}$};
    \node (PSHC) [right = 6cm of C] {$\FUNC{\CC}{\SET}[\times]$};
    \node (E) [below = 2cm of PSHC] {$\FUNC{\CC}{\SET}[\times]$};
    \node (C2) [below = 2cm of C] {$\Op{\DD}$};
    \path[->] (C) edge node [above] {$\Yo$} (PSHC);
    \path[->] (C2) edge node [below] {$\Yo$} (E);
    \path[->] (C) 
      edge 
      node [left] {$\Op{f}$} 
      (C2);
    \path[->, dashed] (PSHC) 
      edge 
      node [left] {$\LKan{f}$} 
        node [right = .15em] {$\Adjoint$}
      (E);
      \path[->, bend right = 45, dotted] (E) edge node [right] {$\NerveS{f}$} (PSHC);
  \end{tikzpicture}
\end{equation}
We have that $\NerveS{f}(P) = \Hom{\Yo{f(-)}}{P} \cong P \circ f$ acts by
precomposition. It clearly preserves limits; it also preserves sifted colimits,
as they are computed pointwise in $\FUNC{\CC}{\SET}[\times]$ \cite[\S
9.3]{adamek_2010}. Such a functor is called an \emph{algebraic functor} \cite[\S
9.4]{adamek_2010}. We thus obtain a (strict) 2-functor
\[
  \FUNC{-}{\SET}[\times] : \textbf{Cat}_\text{cc, stable}^\text{op} \fto \ALGCAT
\]
from the (strict) 2-category of \emph{Cauchy-complete stable categories}, stable
functors, and natural transformations to the (strict) 2-category of
\emph{algebraic categories}, algebraic functors, and natural transformations.
This functor is a \emph{biequivalence}, and hence a \emph{2-duality}; more
details can be found in \cite[\S 9]{adamek_2010}. Finally, this can be further
refined to 2-dualities that `preserve truth' in terms of frames.

\subsection{Sheaves?}

Stable semantics superficially appears similar to the Kripke-Joyal semantics in
a sheaf category over an appropriate site. Indeed, suppose that $\CC$ is an
\emph{extensive} category \cite{carboni_1993}. As it also has finite products,
it is distributive \cite[\S 2]{cockett_1993}. We could then take the extensive
topology on $\CC$, where covers are given by the disjoint injections into finite
coproducts \cite[\S 4.6]{cockett_1993}. In that case the category
$\FUNC{\CC}{\SET}[\times]$ of product-preserving presheaves coincides with the
sheaf category over $\CC$ with the extensive topology, and the Kripke-Joyal
semantics seem rather close to the stable ones (this argument was pointed out to
me by Jonathan Sterling).

However, this does \emph{not} coincide with the stable semantics. The reason is
that the assumption that $\CC$ is extensive is too strong. In particular, no
partial order with non-trivial meets is extensive. Moreover, all sheaf toposes
have disjoint coproducts, and any finite-coproduct-preserving embedding $\CC
\fto \textsf{Sh}(\DD, J)$ reflects that disjointness back into $\CC$ (this
argument was pointed out to me by Fabian Ruch). But we are interested in many
non-extensive base categories, including all cartesian closed ones.

\section{Two-dimensional stable semantics for modal logic}
\label{section:2dss-modal}

\begin{definition}
  A \emph{stable profunctor} on $\CC$ is a profunctor $R : \Op{\CC} \times \CC
  \fto \SET$ which preserves products in its second argument, and for which
  $\Lambda R : \Op{\CC} \fto \FUNC{\CC}{\SET}[\times]$ preserves coproducts.
\end{definition}

This corresponds precisely to an adjunction on $\FUNC{\CC}{\SET}[\times]$, by
the universal property \eqref{diagram:sift-extension-2}:
\begin{equation*}
  \begin{tikzpicture}[node distance=2.5cm, on grid, baseline=(current  bounding  box.center)]
    \node (C) {$\Op{\CC}$};
    \node (PSHC) [right = 4cm of C] {$\FUNC{\CC}{\SET}[\times]$};
    \node (E) [below = 2cm of PSHC] {$\FUNC{\CC}{\SET}[\times]$};
    \path[->] (C) edge node [above] {$\Yo$} (PSHC);
    \path[->] (C) edge node [below = .5em] {$\Lambda R$} (E);
    \path[->, dashed] (PSHC) 
      edge 
        node [left] {$\blacklozenge_R$} 
        node [right = .15em] {$\Adjoint$}
      (E);
      \path[->, bend right = 45, dotted] (E) edge node [right] {$\Box_R$} (PSHC);
  \end{tikzpicture}
\end{equation*}
These functors are more directly expressed as follows:
\begin{align*}
  \label{equation:modal-2d}
  \sem{\blacklozenge \varphi}_w
  \defeq \blacklozenge_R \sem{\varphi}_w
  &= \int^{v \in \CC} \sem{\varphi}_v \times R(v, w)
  &
  \sem{\Box \varphi}_w
  \defeq \Box_R \sem{\varphi}_w
  &= \Hom[\FUNC{\CC}{\SET}[\times]]{R(w, -)}{\sem{\varphi}}
\end{align*}
The expression for $\Box$ follows from \eqref{diagram:sift-extension-2}; it
evidently preserves products. The expression for $\blacklozenge$ is the coend
formula for the left Kan extension along Yoneda to \emph{all presheaves}
\cite[\S 2.3]{loregian_2021}. It is still the right expression, by the
uniqueness of adjoints. However, it is not easy to see that it preserves
products in $w$: to see that, write $\sem{\varphi}$ as a sifted colimit and use
the rules of coends to show that this set is isomorphic to $\Colim[(v, x) \in
\PSHEL{\sem{\varphi}}] R(v, w)$. The latter clearly preserves products in $w$:
$R(v, -)$ does, and the colimit is sifted.

As in \cite{kavvos_2024a}, these are the expected categorifications of the
semantics of $\blacklozenge$ and $\Box$.

\subsection{Completeness}
  \label{section:2dss-modal-completeness}

We can now show another completeness result like that of
\S\ref{section:2dss-int}, which applies to \emph{intuitionistic modal proofs}.
These are bicartesian closed categories $\CC$ equipped with an adjunction
$\blacklozenge \Adjoint \Box$. They can be represented syntactically by
Clouston's \emph{Fitch-style $\lambda$-calculus} which is sound and complete for
such models \cite{clouston_2018}. Then
$
  \Yo : \CC \fto \FUNC{\Op{\CC}}{\SET}[\times]
$
is an embedding that preserves all this structure: Scott's lemma
\ref{lemma:ccc-conservativity-1} extends to $\blacklozenge$ and $\Box$,
following exactly the proof in \S\ref{section:int-models-as-frames}. Then, a
result similar to Lemma \ref{lemma:ccc-conservativity-2} holds, leading to the
\begin{theorem}
  The subclass of models consisting of categories
  $\FUNC{\Op{\CC}}{\SET}[\times]$ over a distributive $\CC$ equipped with an
  adjunction $\blacklozenge \Adjoint \Box$ on $\FUNC{\Op{\CC}}{\SET}[\times]$ is
  complete for equational theory of intuitionistic modal proofs.
\end{theorem}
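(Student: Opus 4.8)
The plan is to mimic the proof of the corresponding non-modal completeness theorem in \S\ref{section:2dss-int-completeness}, with two modifications: everywhere ``bicartesian closed category'' becomes ``bicartesian closed category equipped with an adjoint pair $\blacklozenge \Adjoint \Box$'', and the appeal to the completeness of bicartesian closed categories is replaced by an appeal to the completeness half of Clouston's theorem for the Fitch-style $\lambda$-calculus \cite{clouston_2018}. Concretely: let $\Gamma \vdash M, N : A$ be two intuitionistic modal proof terms whose interpretations coincide in $\FUNC{\Op{\CC}}{\SET}[\times]$ for every distributive $\CC$ and every adjunction on that category. By Clouston's completeness it suffices to show that $\sem{M}_{\CC} = \sem{N}_{\CC}$ in every bicartesian closed $\CC$ carrying an adjunction $\blacklozenge \Adjoint \Box$, since this yields $\Gamma \vdash M = N : A$.

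The first step is to build, from such a model $(\CC, \blacklozenge, \Box)$, a model in the subclass together with a structure-preserving comparison. Since $\CC$ is bicartesian closed it is distributive by Proposition \ref{proposition:sifted-ccc}, so $\FUNC{\Op{\CC}}{\SET}[\times]$ is again bicartesian closed and $\Yo$ preserves that structure. I would then equip $\FUNC{\Op{\CC}}{\SET}[\times]$ with the adjunction $\blacklozenge_p \Adjoint \Box_p$ arising as the left Kan extension of $\blacklozenge \circ \Yo$ along Yoneda, exactly as in \eqref{diagram:left-kan-scott} but now internal to the presheaf category rather than to a poset. This is a model in the stated subclass, and the calculation of \S\ref{section:int-models-as-frames} carries over verbatim: the square commutes, with $\blacklozenge_p(\Yo{c}) \cong \Yo{\blacklozenge c}$ and $\Box_p(\Yo{c}) \cong \Yo{\Box c}$ on representables. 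Hence $\Yo$ is a (weak) morphism of categorical models of intuitionistic modal proofs --- a cartesian closed functor that preserves the adjunction up to coherent isomorphism.

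The second step is the modal strengthening of Lemma \ref{lemma:ccc-conservativity-2}: for every modal type $A$ an isomorphism $\theta_A : \sem{A}_{\FUNC{\Op{\CC}}{\SET}[\times]} \cong \Yo{\sem{A}_{\CC}}$, defined by induction --- the product, exponential, sum, unit and empty cases as in the non-modal lemma, and the two new cases $\theta_{\blacklozenge A}$, $\theta_{\Box A}$ obtained by composing $\blacklozenge_p \theta_A$, $\Box_p \theta_A$ with the canonical isomorphisms $\blacklozenge_p(\Yo{\sem{A}_{\CC}}) \cong \Yo{\blacklozenge \sem{A}_{\CC}}$, $\Box_p(\Yo{\sem{A}_{\CC}}) \cong \Yo{\Box \sem{A}_{\CC}}$ --- together with the claim that for every modal term $\Gamma \vdash M : A$ the square relating $\sem{M}_{\FUNC{\Op{\CC}}{\SET}[\times]}$ to $\Yo{\sem{M}_{\CC}}$ through the $\theta$'s commutes. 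Granting this, the hypothesis $\sem{M}_{\FUNC{\Op{\CC}}{\SET}[\times]} = \sem{N}_{\FUNC{\Op{\CC}}{\SET}[\times]}$ forces $\Yo{\sem{M}_{\CC}} = \Yo{\sem{N}_{\CC}}$; faithfulness of $\Yo$ gives $\sem{M}_{\CC} = \sem{N}_{\CC}$; and since $(\CC, \blacklozenge, \Box)$ was arbitrary, Clouston's completeness theorem delivers $\Gamma \vdash M = N : A$.

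The main obstacle is precisely the modal case of that lemma: one must check that $\theta_{\blacklozenge A}$ and $\theta_{\Box A}$ commute with the introduction and elimination rules of the Fitch-style calculus --- i.e.\ that the modal ``lock'' and the context manipulations it performs are intertwined by the $\theta$'s on the two sides. This reduces to a diagram chase that, at bottom, uses only that $\Yo$ preserves the adjunction on the nose on representables and that $\blacklozenge_p$, $\Box_p$ are \emph{defined} by Kan extension along $\Yo$; it is the exact categorification of the computation of \S\ref{section:int-models-as-frames} showing that \eqref{diagram:left-kan-scott} commutes. I expect it to go through, with the only delicate points being the $\blacklozenge$-elimination and $\Box$-introduction rules. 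The remaining ingredients --- the reduction via Clouston, the distributivity of $\CC$, the construction of $\blacklozenge_p \Adjoint \Box_p$, and the non-modal part of the $\theta$-induction --- are all routine given \S\ref{section:2dss-int} and \S\ref{section:stable-modal}.
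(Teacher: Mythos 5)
Your proposal matches the paper's own argument: the paper likewise reduces to Clouston's completeness theorem for Fitch-style $\lambda$-calculi, transports the adjunction to $\FUNC{\Op{\CC}}{\SET}[\times]$ along Yoneda, extends Scott's Lemma \ref{lemma:ccc-conservativity-1} and Lemma \ref{lemma:ccc-conservativity-2} to $\blacklozenge$ and $\Box$, and concludes by faithfulness of $\Yo$. The one adjustment is that the Kan extension should be taken via \eqref{diagram:sift-extension-2} (valid because $\blacklozenge$, being a left adjoint, preserves coproducts, and $\Yo$ into product-preserving presheaves does too) so that $\blacklozenge_p \Adjoint \Box_p$ lives on $\FUNC{\Op{\CC}}{\SET}[\times]$ rather than on all presheaves as in \eqref{diagram:left-kan-scott}.
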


\subsection{Morphisms}
  \label{section:2dss-modal-morphisms}

The 2-duality of \S\ref{section:2dss-int-morphisms} can be restricted to a
2-duality
\[
  \textbf{SProf}^\text{op}_\text{cc} \Equiv \textbf{AlgCatO}
\]
The (strict) 2-category on the left has Cauchy-complete categories equipped with
a stable profunctor as 0-cells; stable functors that preserve the profunctor as
1-cells; and natural transformations natural transformations. The (strict)
2-category on the right has algebraic categories $\mathcal{A}$ equipped with an
\emph{operation} $\Box_\mathcal{A} : \mathcal{A} \fto \mathcal{A}$ that
preserves limits and sifted colimits as 0-cells; algebraic functors $F :
\mathcal{A} \fto \mathcal{B} $ equipped with natural transformations
$F\Box_\mathcal{A} \To \Box_\mathcal{B} F$ as 1-cells; and natural
transformations as 2-cells. This is essentially a direct categorification of the
duality of \S\ref{section:modal-morphisms}. It can be further refined to a
2-duality where the morphisms preserve truth on the left, and the operator and
implication on the right.

\section{Related work}

\subsection*{Completions}

Much of the development in \S\ref{section:stable} was based on the \emph{filter
completion} of a distributive lattice. The dual notion of \emph{ideal
completion} is far more commonly encountered. It plays a significant r\^{o}le in
domain theory, as the ideal completion of a preorder is the \emph{free algebraic
dcpo} over an arbitrary set of compact-elements-to-be \cite[\S
2.2.6]{abramsky_1994}. The category of algebraic dcpos and continuous maps is
then equivalent to the category of preorders and \emph{approximable relations},
which appear rather similar to stable bimodules. The ideal completion also plays
a central r\^{o}le in Stone duality for distributive lattices \cite[\S
II]{johnstone_1982}.

\subsection*{Choice-free dualities}

It is well-known that many Stone-type dualities require the use of a choice-like
principle, e.g. the existence of prime ideals. Choice is sometimes only
necessary when connecting the localic viewpoint with the topological one; see
e.g. \cite[\S II.4]{johnstone_1982}.

Avoiding this use of choice has been a rather active area of research in recent
years, following the work of Bezhanishvili and Holliday
\cite{bezhanishvili_2020}. A choice-free duality for Heyting algebras, as well
as multiple references to recent literature, is given by Hartonas
\cite{hartonas_2024}.

\subsection*{Other related work}

Bezhanishvili et al. \cite{bezhanishvili_2024} present a positive modal logic.
Their semantics uses a meet-semilattice as a frame. Every formula is interpreted
as a filter over that, leading to the same falsity and disjunction clauses as
the ones I use here. However, the lack of joins and distributivity means that
they cannot handle implication. They also present some interesting links between
their logic and logics of \emph{independence} and \emph{team semantics}
\cite{yang_2016,kontinen_2016,yang_2017} to which the results of this paper
might be applicable.

De Groot and Pattison \cite{de_groot_2022} study the $(\land \times)$ fragment
of intuitionistic logic with a meet-preserving modality $\Box$. They give it a
semantics in semilattices, relating it to filters. Their semantics for $\Box$ is
based on relations which are extremely close to stable bimodules.

The coherent semantics appears rather close to the Kripke semantics of the
separating conjunction of the $\textsf{BI}$ logic of O'Hearn and Pym
\cite{ohearn_1999,pym_2002}. This is not surprising, as the Day convolution is
one of their main monoidal products. However, the fact that their Kripke
semantics can only be shown complete when falsity (interpreted as never true) is
excluded \cite[\S 4]{pym_2002} suggests that there might be interesting
connections with the results presented here.

Galal \cite{galal_2020} explores a categorification of the Scott-continuous
model of Linear Logic, which also consists of prime algebraic lattices (but with
weaker morphisms than the ones used here)
\cite{huth_1994,huth_2000,nygaard_2004,winskel_2004}. The key notion of
directed-completeness is replaced by sifted colimits. No connection to Kripke
semantics is made.

\begin{ack}

  I am grateful to Fabian Ruch for suggesting that product-preserving presheaves
  might be useful. Thanks are due to Daniel Gratzer for pointers to higher
  algebra; to Philip Saville and Pedro Azevedo de Amorim for the use of Day
  convolution in $\textsf{BI}$. I am also grateful to Younesse Kaddar, Marcelo
  Fiore, Ayberk Tosun, Jonathan Sterling, Sam Staton, and Alyssa Renata for many
  stimulating conversations and pointers to relevant literature. Finally, I am
  thankful to one of the anonymous reviewers for pointing out that the coherent
  semantics also work in a distributive semilattice, as well as multiple
  references to relevant literature on dualities.

\end{ack}

\bibliographystyle{./entics}
\bibliography{2dks}

\end{document}